\pgfplotsset{width=7cm,compat=1.3}
\newcommand{\tf}{\hat{f}}
\newcommand{\R}{\mathbb{R}}
\newcommand{\N}{\mathcal{N}}
\newcommand{\eps}{\varepsilon}
\renewcommand{\P}{\mathbb{P}}
\newcommand{\E}{\mathbb{E}}
\newcommand{\Var}{\mathrm{Var}}
\newcommand{\Cov}{\mathrm{Cov}}
\newcommand{\ed}{\stackrel{d}{=}}
\newcommand{\sign}{\mathrm{sign}}
\newcommand{\Pl}{\text{pr}^{(\lambda)}}
\newcommand{\bW}{\mathbf{W}}
\newcommand{\bB}{\mathbf{B}}
\newcommand{\bL}{\mathbf{L}}
\newcommand{\bC}{\mathbf{C}}
\newcommand{\bI}{\mathbf{I}}
\newcommand{\bS}{\mathbf{S}}
\newcommand{\bJ}{\mathbf{J}}
\newcommand{\bV}{\mathbf{V}}
\newcommand{\bR}{\mathbf{R}}
\newcommand{\bphi}{\bm{\phi}^{(\ell)}}
\newcommand{\Sp}{\bm{\Sigma}_p}
\newcommand{\bSigma}{\bm{\Sigma}}
\newcommand{\Sf}{\bm{\Sigma}_f}
\newcommand{\bhe}{\hat{\mathbf{e}}}
\newcommand{\bv}{\mathbf{v}}
\newcommand{\btheta}{\bm{\theta}}
\newcommand{\bx}{\mathbf{x}}
\newcommand{\bX}{\mathbf{X}}
\newcommand{\bZ}{\mathbf{Z}}
\newcommand{\bs}{\mathbf{s}}
\newcommand{\bp}{\mathbf{p}}
\newcommand{\bpell}{\mathbf{p}^{(\ell)}}
\newcommand{\bmu}{\bm{\mu}}
\newcommand{\by}{\mathbf{y}}
\newcommand{\fell}{f^{(\ell)}}
\newcommand{\bfl}{\mathbf{f}}
\newcommand{\bfell}{\mathbf{f}^{(\ell)}}
\newcommand{\tbf}{\hat{\mathbf{f}}}
\newcommand{\bnu}{\bm{\nu}}
\newcommand{\bfmostlkl}{\widetilde{\mathbf{f}}^{(\ell)}}
\newcommand{\fmostlkl}{\widetilde{f}^{(\ell)}}
\newcommand{\fmostlikely}{\widetilde{f}^{(\ell)}}
\newcommand{\bfmostlikely}{\widetilde{\mathbf{f}}^{(\ell)}}
\newcommand{\frv}{\bar{\mathbf{f}}^{(\ell)}}
\newcommand{\prv}{\mathbf{\bar{p}}^{(\ell)}}
\DeclareMathOperator*{\argmin}{arg\,min}
\DeclareMathOperator*{\arginf}{arg\,inf}
\newenvironment{customthm}[1]
  {\innercustomthm}
  {\endinnercustomthm}
\newtheorem{lem}{Lemma}
\begin{document}

\title{Emergent failures and cascades in power grids: a statistical physics perspective}
\author{Tommaso Nesti}
\affiliation{CWI, Amsterdam 1098 XG, Netherlands}
\author{Alessandro Zocca}
\affiliation{California Institute of Technology, Pasadena, California 91125, USA}
\author{Bert Zwart}
\affiliation{CWI, Amsterdam 1098 XG, Netherlands}
\date{\today}
\pacs{89.75.Hc,89.20.-a,88.80.-q}

\begin{abstract}
%
We model power grids transporting electricity generated by intermittent renewable sources as complex networks, where line failures can emerge indirectly by noisy power input at the nodes.
By combining concepts from statistical physics and the physics of power flows, and taking weather correlations into account, we rank line failures according to their likelihood and establish the most likely way such failures occur and propagate. Our insights are mathematically rigorous in a small-noise limit and are validated with data from the German transmission grid.
\end{abstract}

\maketitle
Understanding cascading failures in complex networks is of great importance and has received a lot of attention in recent years~\cite{Albert2004,AB02,AlbertJeongBarabasi2000,Crucitti2003,Crucitti2003a,Mirzasoleiman2011,
Motter2002,Motter2004,Heide2008,Kinney2005,Schafer2018,Sun2007,Yang2017,Watts2002,Witthaut2015,Witthaut2016,Witthaut2013}. 
Despite proposing different mechanisms for their evolution, a common feature is that cascades are triggered by some \textit{external} event. This initial attack is chosen either (i) deliberately, to target the most vulnerable or crucial network component or (ii) uniformly at random,  to understand the average network reliability. This distinction led to the insight that complex networks are resilient to random attacks, but vulnerable to targeted attacks~\cite{Cohen2000,Cohen2001,Motter2002}. However, both lead to the \textit{direct} failure of the attacked network component.

In this Letter, we focus on networks in which edge failures occur in a fundamentally different manner. Specifically, we consider networks where fluctuations of the node inputs can trigger edge failures. 
The realization (which we call configuration) of the noise at the nodes is not only the cause of edge failures, but can also impact the way they propagate in the network.

We present our results in the context of power grids that transport electricity generated by solar and wind parks. In power grids, line failures can arise when the network is driven from a stable state to a critically loaded state by external factors;
intermittent power generation at the nodes causes random fluctuations in the line power flows, possibly triggering outages and cascading failures.
Thus, line failures can emerge {\em indirectly} due to the interplay between noisy correlated (due to weather) power input at the nodes, the network structure, and power flow physics.
This interplay is challenging to analyze, yet this problem is urgent as the penetration of renewable energy sources is increasing~\cite{Bienstock2015, Dobson2007}.

We analyze this interplay using statistical physics and large deviations theory.
We consider a parsimonious static stochastic model similar to~\cite{Wang2012}, introduce a scaling parameter $\eps$ describing the magnitude of the noise and consider the regime $\eps \to 0$.
In the limit, we can identify the most vulnerable lines and explicitly determine the most likely configuration of power inputs leading to failures and subsequent propagating failures. These results 
are validated 
using real data for the German transmission network. 
Previous works applying large-deviations techniques to problems in complex networks dynamics, such as epidemic extinction and biophysical networks, include~\cite{Wells2015,Hindes2016}.


We model a transmission network by a connected graph $G$ with $n$ nodes representing the \textit{buses} and $m$ directed edges modeling \textit{transmission lines}. 
The nominal values of net power injections at the nodes are given by $\bmu=\{\mu_i\}_{i=1,\ldots,n}$.
We model the stochastic fluctuation of the power injections around $\bmu$, due to variability in renewable generation, by means of the random vector $\bp=\{p_i\}_{i=1,\ldots,n}$, which is assumed to follow a multivariate Gaussian distribution with density
\begin{equation}
\label{eq:ass}
        \varphi(\bx)=\frac{\exp (-{\scriptstyle \frac{1}{2}}(\bx-\bmu)^T (\eps\Sp )^{-1}(\bx-\bmu))}{(2\pi)^{\frac{n}{2}} \det (\eps \Sp )^{ \frac{1}{2}}},
\end{equation}
with $\eps\Sp \in\R^{n\times n}$ being the covariance matrix of $\bp$.
In our theoretical analysis, we assume that $\Sp$ is known and let $\eps \to 0$.


%
The Gaussian assumption is debatable, both for solar and wind. While consistent with atmospheric physics~\cite{BCH14} and recent wind park statistics~\cite{Kolumban2017,Troldborg2014}, different models are preferred for different timescales~\cite{Brouwer2014,Schlachtberger2016,Peings2014,MilanWaechterPeinke2013}. An extension of our framework to the dynamic model in~\cite{MilanWaechterPeinke2013} looks promising (using Freidlin-Wentzell theory as in~\cite{NestiNairZwart}). For a static non-Gaussian extension, see~\cite{NZZ17sm}.

\begin{figure*}[ht!]
    \centering
    \begin{subfigure}[t]{0.3\textwidth}
        \centering
\includegraphics[width=\textwidth,height=4.8cm]{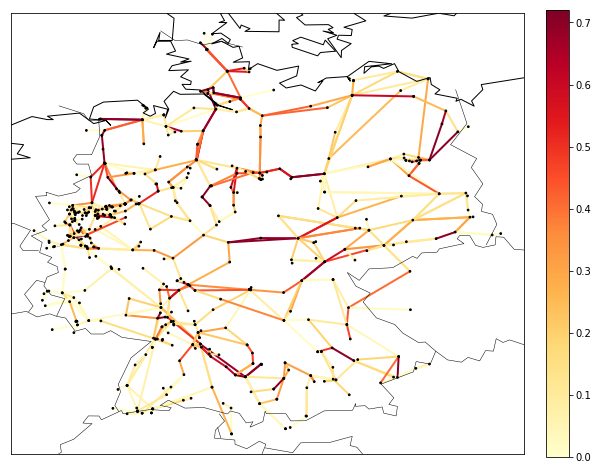}
\caption{\footnotesize   Nominal line flows  $|\nu_{\ell}|$ at $11$am.}\label{fig:nominalflows_11am}
    \end{subfigure}%
    ~
    \begin{subfigure}[t]{0.3\textwidth}
        \centering
\includegraphics[width=\textwidth,height=4.8cm]{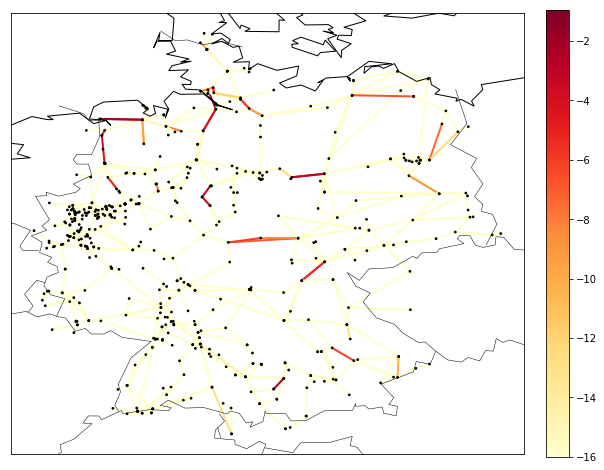}
\caption{\footnotesize True overload probabilities $\log_{10}\mathbb{P}(|f_{\ell}|\ge 1)$ at $11$am.
}
\label{fig:prob_11am}
    \end{subfigure}
    ~
     \begin{subfigure}[t]{0.3\textwidth}
     \centering
      \includegraphics[width=\textwidth,height=4.8cm]{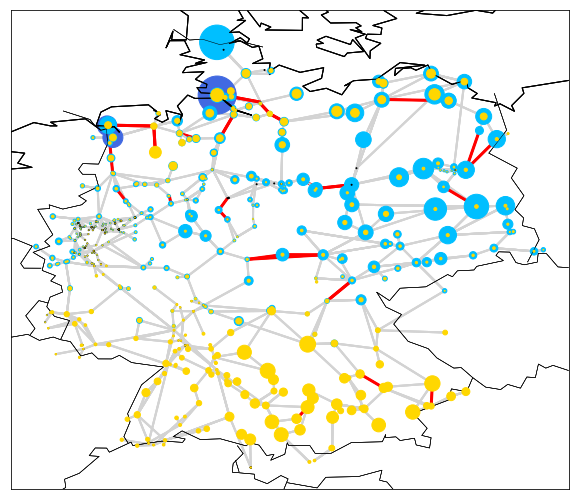}
\caption{\footnotesize
Top $5\%$ of most likely lines to fail (red) at $11$am, according to~\eqref{eq:decayrates}, and nominal injections from renewable sources. 
}
\label{fig:5percent_mostlikely_mix_11am}
  \end{subfigure}
  \label{fig:panel}
  \caption{}\vspace{-0.5cm}
  \end{figure*}



Assuming the vector $\bmu$ has zero sum and using the DC approximation~\cite{Bienstock2015}, the line power flows $\bfl=\{f_i\}_{i=1,\ldots,m}$ are given by
\begin{equation}
\label{eq:linearmap}
	\bfl=\bV \bp,
\end{equation}
where $\bV$ is an $m\times n$ matrix encoding the grid topology and parameters (i.e.,~line susceptances).
The DC approximation is commonly used in transmission system analysis~\cite{Purchala2005,Stott2009,Powell2004,Wood2014}. More realistic nonlinear models based on AC power flows~\cite{Mehta2016} may be analyzed leveraging the contraction principle \cite{DemboZeitouni2009}. 

The total net power injected in the network $\sum_{i=1}^n p_i$ is non-zero as $\bp$ is random.
Automated affine response and redispatch mechanisms take care of this issue in power grids. Mathematically, this corresponds to a ``distributed slack'' in our model: the total power injection mismatch is distributed uniformly among all nodes (the matrix $\bV$ accounts for this; see~\cite{NZZ17sm}).

In view of Eqs.~\eqref{eq:ass}-\eqref{eq:linearmap}, the line power flows $\bfl$ also follow a multivariate Gaussian distribution with mean $\bnu$ and covariance matrix $\eps \Sf$. The vector $\bnu =\bV \bmu\in\R^m$ describes the nominal line flows, while the covariance matrix $\eps \Sf =\eps \bV \Sp  \bV^T$ describes the correlations between line flows fluctuations, taking into account both the correlations of the power injections (encoded by $\Sp $) and correlations created by the network topology due to power flow physics (Kirchhoff's laws) via $\bV$.

A line \textit{overloads} if the absolute amount of power flowing in it exceeds a given \textit{line threshold}. We assume that such overloads immediately lead to the outage of the corresponding line, to which we will henceforth refer simply as \textit{line failure}.
The rationale behind this assumption is that there are security relays on high voltage transmission lines performing an emergency shutdown as soon as the current exceeds a dangerous level. Without such mechanisms, lines may overheat, sag and eventually trip.

We can express the line flows in units of the line threshold by incorporating the latter in the definition of $\bV$~\cite{NZZ17sm}, so that $\bfl$ is the vector of \textit{normalized} line power flows and the failure of line $\ell$ corresponds to $|f_\ell| \ge 1$. We let the power grid operate on average safely by assuming that $\max_{\ell=1,\dots,m}|\nu_{\ell}|<1$, so that only large fluctuations of line flows lead to failures.

We are most interested in scenarios where power grids are highly stressed, meaning that the nominal power injections $\{\mu_i\}_{i=1,\dots,n}$ are such that the corresponding nominal line power flows $\{\nu_\ell\}_{\ell=1,\dots,m}$ are close to their thresholds. Such a stress could be caused by very high wind generation~\cite{Pesch2014}.

An illustrative scenario is reported in Fig.~\ref{fig:nominalflows_11am}, which depicts a snapshot of nominal line flows on the SciGRID German network~\cite{SCIGRID00}. SciGRID is a detailed model of the actual German transmission network with $n=585$ buses and $m=852$ lines that we use as main illustration. The dataset includes load/generation time series, line limits, grid topology and generation costs.
In our case study, we obtain $\bmu$ by solving an Optimal Power Flow problem (OPF~\cite{Huneault1991}) based on realistic data for wind and solar generation, and we estimate $\eps\Sp$ using ARMA models; for details see the supplement~\cite{NZZ17sm}, which also describes a setting covering conventional controllable power plants.


We now turn to the analysis of emergent failures and their propagation using large deviations theory~\cite{Touchette2009}. We begin by deriving the exponential decay of probabilities of single line failure events $|f_\ell| \ge 1$ for $\ell=1,\dots,m$.
As line power flows are Gaussian, we obtain, see Example 3.1 in~\cite{Touchette2009}, that
\begin{equation}\label{eq:decayrates}
	I_\ell  =-\lim_{\eps\to 0} \eps \log \P_\eps (|f_\ell|\ge 1) = \frac{(1-|\nu_\ell|)^2}{2 \sigma^2_\ell},
\end{equation}
where $\sigma^2_\ell = (\Sf)_{\ell\ell}$. We call $I_\ell$ the {\em decay rate} of the failure probability of line $\ell$.
Thus, for small $\eps$, we approximate the probability of the emergent failure of line $\ell$ as
\begin{equation}\label{eq:individual_LD_approx}
	\P(|f_\ell| \ge 1) \approx \exp (-I_\ell / \eps)=\exp \Bigl(- \frac{(1-|\nu_\ell|)^2}{2 \eps\sigma^2_\ell} \Bigr),
\end{equation}

and that of the first emergent failure as
\begin{equation}\label{eq:LD_approx}
	\P(\max_\ell |f_\ell| \ge 1) \approx \exp (- \min_\ell I_\ell / \eps).
\end{equation}
These approximations for failure probabilities may not be sharp in general, even when $\eps$ is small, since all terms that are decaying subexponentially in $1/\eps$ are ignored.
Nevertheless, Eq.~\eqref{eq:individual_LD_approx} is quite useful for ranking purposes,
%
allowing to explicitly identify the lines that are most likely to fail.
To verify this empirically, we note that the expression in Eq.~\eqref{eq:individual_LD_approx} only depends on the product $\eps\sigma^2_{\ell}=\eps(\bV\Sp \bV^T)_{\ell\ell}$, and thus, ultimately, only on the product $\eps \Sp $, which in our case study we estimate directly from the SciGRID data, see~\cite{NZZ17sm}.

Fig.~\ref {fig:prob_11am} shows the heatmap for the exact line failure probabilities $\mathbb{P}(|f_{\ell}|\ge 1)$, for the same day and hour as in Fig.~\ref{fig:nominalflows_11am}: it is clear that a larger $|\nu_{\ell}|$ does not necessarily imply a higher chance of failure.  Fig.~\ref{fig:5percent_mostlikely_mix_11am} depicts the $5\%$ most likely lines to fail, ranked according to $I_{\ell}$. 
The ranking based on the large deviations approximation successfully recovers the most likely lines to fail, and, in fact, yields the same ordering as the one based on exact probabilities~\cite{NZZ17sm}, thus providing an accurate indicator of system vulnerabilities.

Fig.~\ref{fig:5percent_mostlikely_mix_11am} also illustrates the nominal renewable generation mix: the buses housing stochastic power injections have different colors 
(blue/light blue for wind offshore/onshore, yellow for solar) and sizes proportional to the absolute values of the corresponding nominal injections. Many vulnerable lines are located where the most renewable energy production occurs. However, the interplay between network topology, power flows physics and correlation in power injections caused by weather fluctuations, results in a spread-out arrangement of vulnerable lines, which is hard to infer by looking at nominal values only.




We proceed with an analysis of how emergent failures occur, using again large deviations theory. In particular, we provide an  explicit estimate of the most
likely power injection that caused a specific emergent failure.
To this end, we fix a line $\ell$ and consider the conditional distribution of $\bp$, given $|f_{\ell}| \geq 1$.
The mean of this distribution greatly simplifies as $\eps \to 0$ to
\begin{equation}
		\bpell = \arginf_{\bp\in\mathbb{R}^n \,:\, |\bhe_\ell^T \bV \bp|\ge 1} \frac{1}{2} (\bp-\bmu)^T \Sp ^{-1} (\bp-\bmu).
\end{equation}
If $\nu_\ell \neq 0$, the solution is unique and reads
\begin{align}\label{eq:most_likely_power}
	\bpell	=\, &\bmu + \frac{(\sign(\nu_\ell)-\nu_{\ell})}{\sigma_{\ell}^2} \Sp  \bV^T\bhe_{\ell},
\end{align}
where $\sign(a)=1$ if $a\geq 0$ and $-1$ otherwise, and $\bhe_{\ell} \in \R^m$ is the $\ell$-th unit vector.
As $\eps\to 0$,
the conditional variance of $\bp$ given $|f_{\ell}| \geq 1$ decreases to $0$ exponentially fast in $1/\eps$, yielding that the conditional
distribution of $\bp$ given $|f_\ell|\geq$ 1 gets sharply concentrated around $\bp^{(\ell)}$~\cite{NZZ17sm}.

We interpret  $\bpell$ as the \textit{most likely} power injection profile, conditional on the failure of line $\ell$. 
The corresponding line power flow profile $\bfell =\bV\bpell$
is
\begin{align}\label{eq:most_likely_flows}
        \fell_k =\nu_{k}+\frac{(\sign(\nu_{\ell})-\nu_{\ell})}{\sigma_{\ell}^2}\Cov(f_{\ell},f_{k}), \quad \forall \, k\neq \ell.
\end{align}
As such, our framework provides more explicit information than the approach in~\cite{Chertkov2011}, which approximates the most likely way events happen using the mode, without leveraging large deviations.
In our validation experiments, we found that the error between $\bpell$ and $\E[\,\bp\, \vert \,|f_\ell| \ge 1]$ is typically less than $1\%$ of the nominal values~\cite{NZZ17sm}. A numerical illustration is given in Fig.~\ref{fig:common}.

A key finding is that an emergent line failure does not occur due to large fluctuations only in neighboring nodes, but as a cumulative effect of small unusual fluctuations in the entire network ``summed up'' by power flow physics, and correlations in renewable energy.
Such an emergent failure requires every line flow to be driven to an unusual state $\fell_k$, which deviates from the nominal value $\nu_k$ by an amount proportional to the covariance $\Cov(f_{\ell},f_{k})$, in view of Eq.~\eqref{eq:most_likely_power}.


We continue by investigating the propagation of failures, combining our results describing the most likely power injections configuration leading to the first failure, and the power flow redistribution in the network afterwards.
To this end, we first differentiate between different types of line failures,
by assessing whether the most likely way for failure of line $\ell$ to occur is as (i) an \textit{isolated failure}, if $\smash{|\fell_k| <1}$ for all line $k \neq \ell$, or (ii) a \textit{joint failure}, if there exists some other line $k \neq \ell$ such that $\smash{|\fell_k| \geq 1}$.

Any type of line failure(s) cause(s) a global redistribution of the line power flows according to Kirchhoff's laws, which could trigger further outages and cascades.
In our setting, the power injections configuration $\bpell$ redistributes across an altered network $\widetilde{G}^{(\ell)}$ (a subgraph of the original graph $G$) in which line $\ell$ (and possible other lines, in case of a joint failure) has been removed, increasing stress on the remaining lines. The way this redistribution happens on $\widetilde{G}^{(\ell)}$ is governed by power flow physics and we assume that it occurs instantaneously. Extending this to dynamic models~\cite{Simonson2008, Schafer2017} is a natural future topic, as transient oscillatory effects may severe the impact of line failures.

The power flow redistribution amounts to compute a new matrix $\widetilde{\bV}$ linking the power injections and the new power flows, which can be constructed analogously to $\bV$~\cite{NZZ17sm}. The most likely power flow configuration on $\widetilde{G}^{(\ell)}$ after redistribution is $\bfmostlkl = \widetilde{\bV} \bpell.$

\begin{figure*}[ht!]
    \centering
        \begin{subfigure}[t]{0.3\textwidth}
     \centering
      \includegraphics[width=\textwidth,height=4.8cm]{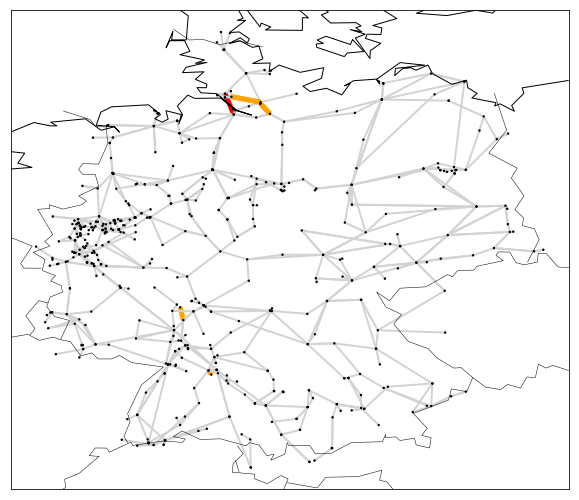}
\caption{\footnotesize  After the emergent failure of line $27$ (red) six additional lines (orange) fail, $4$pm. }
\label{fig:hamburg}
  \end{subfigure}
  ~
  \begin{subfigure}[t]{0.6\textwidth}
    \begin{subfigure}[t]{0.5\textwidth}
        \centering
\includegraphics[width=\textwidth,height=4.8cm]{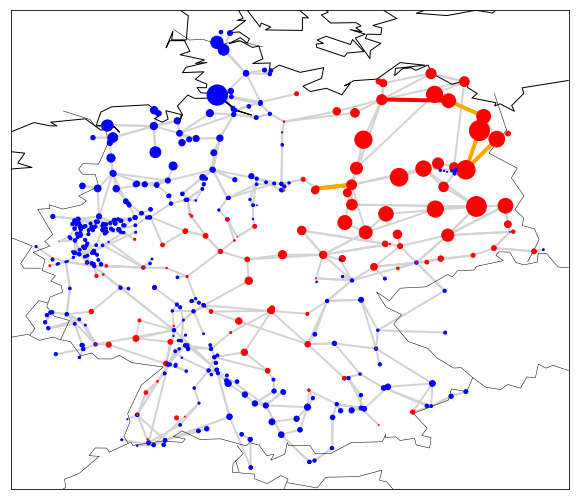}

    \end{subfigure}%
    ~
    \begin{subfigure}[t]{0.5\textwidth}
        \centering
\includegraphics[width=\textwidth,height=4.8cm]{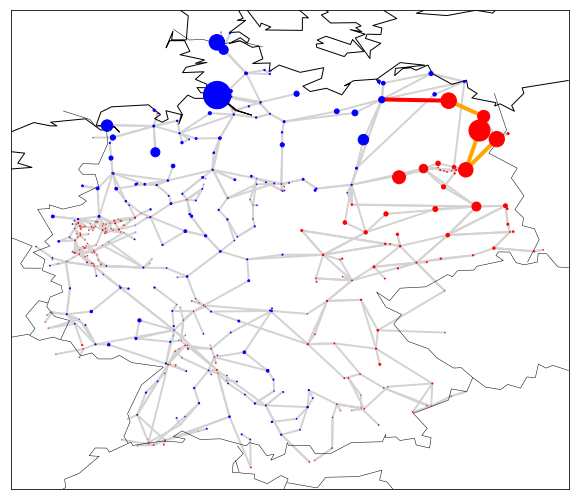}
    \end{subfigure}
\caption{\footnotesize  Most likely power injection $\bpell$ causing the isolated failure of line $720$ (red), and subsequent failures (orange). The bus sizes reflect how much $\bpell$ deviates from $\bmu$ at $11$am (red positive deviations, blue negative). Left, with correlation in noise); Right, without correlation in noise (setting to $0$ all the off-diagonals of $\bm{\Sigma}_p$).}
\label{fig:common}
 \end{subfigure}
  \label{fig:panel2}
  \caption{}\vspace{-0.5cm}
  \end{figure*}


In the special case of an isolated failure (say of line $\ell$) it is enough to calculate the vector $\bphi \in \R^{m-1}$ of (normalized) redistribution coefficients, known as \textit{line outage distribution factors} (LODF)~\cite{Guo2009}. The quantity $\phi^{(\ell)}_j$ takes values in $[-1,1]$, and $\smash{|\phi^{(\ell)}_j|}$ represents the percentage of power flowing in line $\ell$ that is redirected to line $j$ after the failure of the former. The most likely power flow configuration on $\widetilde{G}^{(\ell)}$ after redistribution then equals $\bfmostlkl = \{f^{(\ell)}_k\}_{k\neq \ell} + f^{(\ell)}_\ell\bphi ,$ where $\fell_\ell = \pm 1$ depending on the way the power flow is most likely to exceed the threshold $1$. The power flow configuration $\bfmostlkl$ can be efficiently used to determine which lines subsequently fail, by checking for which $k$ we have $\smash{|\fmostlkl_k| \geq 1}$, see~\cite{NZZ17sm}.
%


There is much evidence that failures propagate non-locally in power grids~\cite{Jung2016,Kettemann2016,Labavic2014,Ronellenfitsch2017,Manik2017}. To analyze this in our framework we first consider a ring network with $\bmu=0$ and $\Sp =I$. In this network there are two paths along which power can flow between any two nodes, using the convention that a positive flow corresponds to a counter-clockwise direction. If line $\ell$ fails, the power originally flowing on line $\ell$ must now flow on the remaining path in the opposite direction. To make this rigorous we show in~\cite{NZZ17sm} that $\smash{\phi^{(\ell)}_k=-1}$ for every $k\neq \ell$. As power flows must sum to zero by Kirchhoff's law, neighboring lines tend to have positively correlated power flows, while flows on distant lines exhibit negative correlations.
Hence, the power injections that make the power flows in line $\ell$ exceed the line threshold (say by becoming larger than $1$) also make the power flows in the antipodal half of the network negative. These will go beyond the line threshold $-1$ after the power flow redistributes, cf.\ Fig.~\ref{fig:cycle}.

\begin{figure}[!h]
\centering
\begin{tikzpicture}[scale=0.75]
	\begin{scope}[every node/.style={circle,thick}]
	    \node[fill=blue,minimum size=0.7cm] (A) at (0,0) {};
	    \node[fill=blue,minimum size=0.5421cm] (B) at (2,0) {};
	    \node[fill=blue,minimum size=0.3131cm] (C) at (3,1.732) {};
	    \node[fill=red,minimum size=0.3131cm] (D) at (2,3.464) {};
	    \node[fill=red,minimum size=0.5421cm] (E) at (0,3.464) {};
	    \node[fill=red,minimum size=0.7cm] (F) at (-1,1.732) {} ;
	\end{scope}
	
%
	
	\begin{scope}[>={Stealth[black]},every edge/.style={draw=gray,very thick}]
	    \path [-] (A) edge node {} (B);
	    \path [-] (B) edge node {} (C);
	    \path [-] (C) edge node {} (D);
	    \path [-] (D) edge node {} (E);
	    \path [-] (E) edge node {} (F);
	\end{scope}
	\begin{scope}[>={Stealth[black]},every edge/.style={draw=orange,very thick}]
	    \path [-] (F) edge node {} (A);
	\end{scope}

	\node[] at (-0.75,0.75) {\footnotesize ${\color{red}1}$};
	\node[] at (-0.775,2.75) {\footnotesize $1/7$};
	\node[] at (1,3.7) {\footnotesize -$13/35$};
	\node[] at (3.1,2.8) {\footnotesize -$19/35$};
	\node[] at (3.2,0.8) {\footnotesize -$13/35$};
	\node[] at (1,-0.3) {\footnotesize $1/7$};
	
	\begin{scope}[every node/.style={circle,thick}]
	    \node[fill=blue,minimum size=0.7cm] (G) at (6,0) {};
	    \node[fill=blue,minimum size=0.5421cm] (H) at (8,0) {};
	    \node[fill=blue,minimum size=0.3131cm] (I) at (9,1.732) {};
	    \node[fill=red,minimum size=0.3131cm] (L) at (8,3.464) {};
	    \node[fill=red,minimum size=0.5421cm] (M) at (6,3.464) {};
	    \node[fill=red,minimum size=0.7cm] (N) at (5,1.732) {} ;
	\end{scope}
	
%
	
	\node[] at (4.9,0.8) {\footnotesize failed};
	\node[] at (6.1,2.5) {\footnotesize $1/7${\color{red}-}${\color{red}1}$};
	\node[] at (7.1,3.7) {\footnotesize -$13/35${\color{red}-}${\color{red}1}$};
	\node[] at (9.3,2.8) {\footnotesize -$19/35${\color{red}-}${\color{red}1}$};
	\node[] at (9.4,0.8) {\footnotesize -$13/35${\color{red}-}${\color{red}1}$};
	\node[] at (7,-0.3) {\footnotesize $1/7${\color{red}-}${\color{red}1}$};
	
	\begin{scope}[>={Stealth[black]},every edge/.style={draw=gray,very thick}]
	    \path [-] (G) edge node {} (H);
	    \path [-] (M) edge node {} (N);
	\end{scope}
	\begin{scope}[>={Stealth[black]},every edge/.style={draw=orange,very thick}]
	    \path [-] (H) edge node {} (I);
	    \path [-] (I) edge node {} (L);
	    \path [-] (L) edge node {} (M);
	\end{scope}
	\begin{scope}[>={Stealth[black]},every edge/.style={dashed,draw=gray,very thick}]
	    \path [-] (N) edge node {} (G);
	\end{scope}
	
	\end{tikzpicture}
	\caption{\footnotesize Left: most likely power injections $\bpell$ leading to the failure of line $\ell$ (orange), visualized using the color and size of the nodes (red positive deviations, blue negative), together with power flows $\fell_k$. Right: situation after the power flow redistribution with three subsequent failures and the values $\fmostlkl_k=f^{(\ell)}_k - 1$, $k\neq \ell$. 
	}
	\label{fig:cycle}
\end{figure}
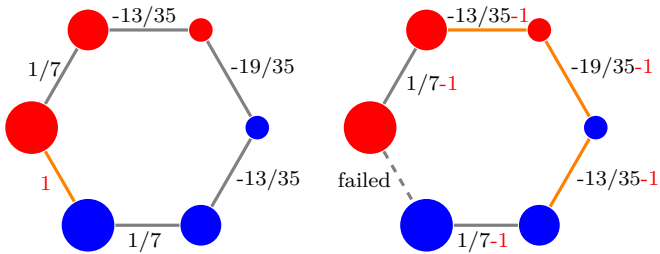


%



In the SciGRID example, Fig.~\ref{fig:hamburg} shows how the emergent isolated failure of line $\ell=27$ causes the failure of six more lines $k_1,\ldots,k_6$, two of which are far way from the original failure.
For validation purposes, we found numerically that $\mathbb{P}(\text{line $k_j$ fails }\,\forall j=1,\ldots,6\,\, \rvert\,\, |f_{27}|\ge 1)\ge  0.9987$.
Conversely, the failure of line $27$ under the nominal power injection profile leads to only two subsequent failures. The nontypical input caused other lines to be more loaded than expected, and these lines get more vulnerable as the cascades progresses, resulting in more subsequent failures.

To validate this insight, we have looked at the first two stages of emergent
cascading failures for several IEEE test networks, and compare them with those of
classical cascading failures, obtained using nominal power injection values rather than the most likely ones
and deterministic removal of the initial failing line; see \cite{NZZ17sm} for a precise description of the experiment.
As before, emergent cascades tend to lead to a higher number of subsequent failures in each stage.

A non-diagonal noise matrix $\Sp $ exacerbates these effects. Experiments (see Fig.~\ref{fig:common}) with our SciGRID case study suggest that, if there is a correlation in noise, for example due to fluctuations in weather patterns, the number of subsequent failures can become higher. Furthermore, it is easier for a failure to be triggered by many small disturbances across the network, compared to the case where these correlations are not taken into account. In the latter case, we see a more local effect with relatively larger disturbances.

In conclusion, we illustrated the potential of concepts from statistical physics and large deviations theory to analyze
emergent failures and their propagation in complex networks. 
Exogenous noise disturbances at the nodes, potentially amplified by correlations, push a complex network into a critical state in which edge failure may emerge. Large deviations theory provides a tool to rank such failures according to their likelihood and predicts how such failures most likely occur and propagate. 
When an emergent edge failure occurs, its impact on the network can be more significant than a purely exogenous failure, possibly resulting in cascades that propagate quicker than in classical vulnerability analysis.

The accuracy of the small noise limit has been validated in our case study, making the case for applying large deviations techniques to more realistic models.
In~ \cite{NZZ17sm} we propose a promising economic application of our approach, showing how our framework can shed light on the trade-off between network reliability and societal costs.
\\

\noindent
{\bf Acknowledgements.} We thank the referees for many useful comments, in particular for suggesting SciGRID. NWO Vici 639.033.413 and NWO Rubicon 680.50.1529 grants provided financial support. AZ acknowledges the support of Resnick Sustainability Institute at Caltech.

\nocite{CKvM16,Schaub2014,Soltan2015,Bapat2006,PyPSA2017,SCIGRID1,
SCIGRID2,OpenStreetMap2017,Milligan2003,Antonanzas2016,Huang2012,Zhang2013corr,
Hodge2011,Schweppe1988,Hines2013,Hines2015,Qi2015,YangNishikawaMotter2017}

%

\pagebreak
\clearpage

\onecolumngrid
\begin{center}
\textbf{\large Supplemental Material for:\\Emergent failures and cascades in power grids: a statistical physics perspective}
\end{center}
\vspace{1cm}

\twocolumngrid

\setcounter{equation}{0}
\setcounter{figure}{0}
\setcounter{table}{0}
\setcounter{page}{1}
\makeatletter
\renewcommand{\theequation}{S\arabic{equation}}
\renewcommand{\thefigure}{S\arabic{figure}}
\renewcommand{\bibnumfmt}[1]{[S#1]}

\section{Power grid model and DC approximation}\label{Power grid model and DC approximation}
We model the power grid network as a connected weighted graph $G$ with $n$ nodes, modeling buses, and $m$ edges, representing the transmission lines. We make use of the \textit{DC approximation}, which is commonly used in high-voltage transmission system analysis~\cite{Purchala2005,Stott2009,Powell2004,Wood2014}.

Choosing an arbitrary but fixed orientation of the transmission lines, the network structure is described by the \textit{edge-vertex incidence matrix} $\bC\in\R^{m\times n}$ defined as
\begin{equation*}
 	\bC_{\ell, i}=\begin{cases}
    	1		&\text{if } \ell=(i,j),\\
    	-1		&\text{if } \ell=(j,i),\\
    	0    	&\text{otherwise}.
    \end{cases}
\end{equation*}
Denote by $\beta_\ell=\beta_{i,j}=\beta_{j,i}>0$ the weight of edge $\ell=(i,j)$, corresponding to the \textit{susceptance} of that transmission line. By convention, we set $\beta_{i,j}=\beta_{j,i}=0$ if there is no transmission line between $i$ and $j$. Denote by $\bB$ the $m \times m$ diagonal matrix defined as $\bB=\mathrm{diag}(\beta_1,\dots, \beta_m)$.

The network topology and weights are simultaneously encoded in the \textit{weighted Laplacian matrix} of the graph $G$, defined as $\bL = \bC^\top \bB \bC$ or entry-wise as
\begin{equation*}
	L_{i,j} =\begin{cases}
		-\beta_{i,j}							& \text{if } i \neq j,\\
		\sum_{k\neq j} \beta_{i,k} 	& \text{if } i=j.
	\end{cases}
\end{equation*}
All the rows of $\bL$ sum up to zero and thus the matrix $\bL$ is singular. The eigenvalue zero has multiplicity one (thanks to the assumption that the graph $G$ is connected) and the corresponding eigenvector is $\bm{1}$. Denote by $\bv_2,\dots,\bv_n$ the remaining eigenvectors of $\bL$, which are orthogonal to $\bm{1}$ and thus have all zero sum.

According to the \textit{DC approximation}, the relation between any zero-sum vector of power injections $p \in \R^n$ and the phase angles $\btheta \in \R^n$ they induce in the network nodes can be written in matrix form as
\begin{equation*}
	\bp = \bL \btheta,
\end{equation*}
Defining $\bL^+ \in \R^{n \times n}$ as the \textit{Moore-Penrose} pseudo-inverse of $\bL$, we can rewrite this as
\begin{equation}
\label{eq:dcapprox}
	\btheta = \bL^+ \bp.
\end{equation}
This latter identity is particularly useful in our context, since it holds for any vector of power injections $\bp \in \R^n$, even if it has no zero sum. Indeed, decomposing the vector $p$ using the basis of eigenvectors $\bm{1}, \bv_2,\dots,\bv_n$ of $\bL^+$ one notices that the only component of $\bp$ with non-zero sum belongs to the null space of $\bL^+$ (generated by the eigenvector $\bm{1}$).

This mathematical fact corresponds to the assumption that the power grid has automatic redispatch/balancing mechanisms, in which the total power injection mismatch is distributed uniformly among all the nodes, thus ensuring that the total net power injection is always zero.

Denote by $J \in \R^{n \times n}$ the matrix with all entries equal to one. Exploiting the eigenspace structure of $\bL$, $\bL^+$ can be calculated as
\begin{equation*}
	\bL^+ = \Big(\bL + \frac{1}{n} \bJ\Big)^{-1} - \frac{1}{n} \bJ,
\end{equation*}

In the literature, instead of $\bL^+$ it is commonly used another matrix $\bar{\bL}$, calculated using the inverse of the $(n-1) \times (n-1)$ sub-matrix obtained from $\bL$ by means of deleting the first row and first column. In our method we are implicitly choosing an average value of zero as a reference for the nodes voltage phase angles, while in the classical one the first node is used as reference by setting is phase angle equal to zero. We remark that these two procedure are equivalent if one is interested in the line power flows, as these latter depend only on the phase angle differences. However, the matrix $\bar{\bL}$ does not account for the distributed slack, which needs to added by post-multiplying by the matrix $\bS= \bI - \frac{1}{n} \bJ \in \R^{n \times n}$.

The real line power flows $\tbf$ are related with the phase angles $\btheta$ via the linear relation $\tbf = \bB \bC \btheta$. In view of Eq.~\eqref{eq:dcapprox}, the line power flow $\tbf$ can be written as a linear transformation of the power injections $\bp$, i.e.
\begin{equation}
\label{eq:tfBLp}
	\tbf= \bB \bC \bL^+ \bp.
\end{equation}
It is convenient to look at the \textit{normalized line power flow} vector $\bfl \in \R^m$, defined component-wise as $f_\ell= \tf_\ell / C_\ell$ for every $\ell=1,\dots,m$, where $C_\ell$ is the \textit{line threshold} of line $\ell$, which is assumed to be given.
Line thresholds are in place because a protracted current overload would heat up the line, causing sag, loss of tensile strength and eventually mechanical failure. If this happens, the failure may cause a global redistribution of the line power
flows which could trigger cascading failures and blackouts.

The relation between line power flows and normalized power flows can be rewritten as $\bfl = \bW \tbf$, where $\bW$ is the $m \times m$ diagonal matrix $\bW=\mathrm{diag}(C_1^{-1}, \dots, C_m^{-1})$. In view of Eq.~\eqref{eq:tfBLp}, the normalized power flows $\bfl$ can be expressed in terms of the power injections $\bp$ as
\begin{equation*}
	\bfl= \bV \bp,
\end{equation*}
where $\bV=\bW \bB \bC \bL^+ \in\R ^{m\times n}$.

\subsection{Stochastic and deterministic injections}
We now briefly outline how the model presented above can be extended to a setting where only a subset of nodes houses stochastic power injections (modeling wind and solar parks), while the other nodes house deterministic injections (corresponding to conventional controllable power plants).

First, we introduce the following notation: if $\mathbf{z}$ is a $n$-dimensional multivariate Gaussian random vector with mean $\bm{\lambda}$ and covariance matrix $\bm{\Lambda}$, it will be denoted by
$\mathbf{z} \sim\mathcal{N}_n(\bm{\lambda},\bm{\Lambda})$. 

Define the following:
\small
\begin{equation*}
\begin{aligned}
&n_s \; && \text{number of stochastic buses}  ,\\
&n_d\; && \text{number of deterministic buses}  ,\\
&\mathcal{I}_s\subseteq \{1,\ldots,n\}\; && \text{indices of stochastic buses}  ,\\
&\mathcal{I}_d\subseteq \{1,\ldots,n\}\; && \text{indices of deterministic buses}  ,\\
&\bp_s=(p_i)_{i\in \mathcal{I}_s}\in \R^{n_s}\; && \text{stochastic power injection}  ,\\
&\bp_d=(p_i)_{i\in \mathcal{I}_d}\in \R^{n_d}\; && \text{deterministic power injection}  ,\\
&\bV_s\in \R^{m\times n_s}\; && \text{matrix consisting of the}\\
&  && \text{columns of $\bV$ indexed by $\mathcal{I}_s$} ,\\
&\bV_d\in \R^{m\times n_d}\; && \text{matrix consisting of the} \\
&  && \text{columns of $\bV$ indexed by $\mathcal{I}_d$} ,\\
&\bfl_s=\bV_s\bp_s\in \R^{m}\; && \text{stochastic component of $f$}  ,\\
&\bfl_d=\bV_d\bp_d \in \R^{m}\; && \text{deterministic component of $f$}.\\
\end{aligned}
\end{equation*}
\normalsize

If a bus hosts both stochastic and deterministic generators, it is considered a stochastic bus. Stochastic power injections are modelled by mean of a $n_s$-dimensional multivariate Gaussian random vector with mean $\bmu_s\in\R^{n_s}$ and covariance matrix $\Sp \in\R^{n_s\times n_s}$, which we denote by
\[
	\bp_s\sim \mathcal{N}_{n_s}(\bmu_s,\eps\Sp ),
\]
With the previous notation, the normalized power flows can be decomposed as $\bfl=\bfl_s+\bfl_d=\bV_s \bp_s+\bfl_d$, where
\begin{align}
&\bfl_s\sim \mathcal{N}_{m}(\bnu_s,\eps\Sf), \nonumber \\
&\bnu_s=\bV_s\bmu_s, \nonumber\\
&\Sf=\bV_s\Sp  \bV_s^\top. \label{eq:Sf}
\end{align}
The nominal power flows values are thus equal to  $\bnu=\bnu_s+\bfl_d$. The decay rate for an overload in line $\ell$, analogously to formula (6) in the Main Body of the paper, is given by
\begin{equation*}
I_{\ell}=\inf_{\bp_s\in\mathbb{R}^{n_s} \,:\, |\bhe_\ell^\top (\bV_s \bp_s+\bfl_d)|\ge 1} \frac{1}{2} (\bp_s-\bmu_s)^\top \Sp ^{-1} (\bp_s-\bmu_s).
\end{equation*}
Provided that $\nu_\ell \neq 0$, the solution is unique and reads
\begin{equation}\label{eq:dec_1}
	\bp_s^{(\ell)}	= \frac{(\sign(\nu_\ell)-\nu_{\ell})}{\sigma_{\ell}^2} \Sp  \bV_s^\top \bhe_{\ell} +\bmu_s  \in \R^{n_s},
\end{equation}
where $\sigma_{\ell}^2=(\Sf)_{\ell,\ell}$. The corresponding most likely realization for power flows reads
\begin{align}\label{eq:dec_2}
	\bfell &= \bV_s\bpell_s+\bfl_d \nonumber\\
	&= \frac{(\sign(\nu_\ell)-\nu_{\ell})}{\sigma_{\ell}^2} \bV_s \Sp  \bV_s^\top \bhe_{\ell}+\bnu_s+\bfl_d\in\R^m.
\end{align}
In the next section we prove these claims for the particular case of $n_s=n$.

\section{Large deviations principles for failure events}
\subsection{Gaussian case}
In this section we provide proofs for Eqs.~$(3)$-$(7)$ in the Main Body. For the sake of clarity we present here only the proofs for the case $n=n_s$, and we remark that Eqs.~\eqref{eq:dec_1}-\eqref{eq:dec_2} in the Supplemental Material can be proved along similar lines.
In the following, we write $\bp_\eps$ and $\bfl_{\eps}$ to stress the dependence of the power injections and of the line power flows on the noise parameter $\eps$.

\begin{customthm}{1}\label{prop:first_failure}
Assume that $\max_{j=1,\ldots,m} |\nu_j| <1$. Then, for every $\ell=1,\ldots,m$, the sequence of line power flows $(\bfl_{\eps})_{\eps >0}$ satisfies the large deviations principle
\begin{equation}\label{eq:LD}
\lim_{\eps\to 0} \eps \log \P ( |(\bfl_{\eps})_\ell| \ge 1 )=- \frac{(1-|\nu_\ell|)^2}{2\sigma_\ell^2}=-I_{\ell}.
\end{equation}
The most likely power injection configuration $\bpell \in \R^n$ given the event $|(\bfl_{\eps})_\ell| \ge 1$ is the solution of the variational problem
\begin{equation}
\label{eq:varpro}
	\bpell = \arginf_{\bp\in\mathbb{R}^n \,:\, |\bhe_\ell^\top \bV \bp|\ge 1} \frac{1}{2} (\bp-\bmu)^\top \Sp ^{-1} (\bp-\bmu),
\end{equation}
which, when $\nu_\ell \neq 0$, can be explicitly computed as
\[
        \bpell = \bmu + \frac{(\sign(\nu_\ell)-\nu_{\ell})}{\sigma_{\ell}^2} \Sp  \bV^\top \bhe_{\ell}.
\]
\end{customthm}

The next proposition shows that the conditional distribution of $\bp_{\eps}$, given $|(\bfl_{\eps})_{\ell}|\ge 1$, gets concentrated around $\bpell$ exponentially fast as $\eps \to 0$, motivating the interpretation of $\bpell$ as the most likely power injection configuration given the failure of line $\ell$.
\begin{customthm}{2}\label{prop:mostlikely_injection}
Assume that $\max_{k=1,\ldots,m}|\nu_k|<1$, and that $\nu_{\ell}\neq 0$. Then, for all nodes $i=1,\ldots,n$, and for all $\delta>0$,
	\begin{equation*}
		\lim_{\eps\to 0}\eps \log \mathbb{P}(( \bp_{\eps})_i\notin (p_i^{(\ell)}-\delta,p_i^{(\ell)}+\delta) \, \big\rvert \, |(\bfl_{\eps})_\ell|\ge 1)<0.
	\end{equation*}
\end{customthm}

The line power flows corresponding to the power injection configuration $\bpell$ can be calculated as
\[
        \bfell = \bV \bpell = \bnu + \frac{(\sign(\nu_\ell)-\nu_{\ell})}{\sigma_{\ell}^2} \bV \Sp  \bV^\top \bhe_{\ell} \in \R^m.
\]
 We observe that the vectors $\bpell$ and $\bfell$ are equal to the conditional expectation of the power injections $\bp_\eps$ and power flows $\bfl_{\eps}$, respectively, conditional on the failure event $f_\ell=\sign(\nu_\ell)$, namely
\begin{align}
 &       \bpell=\E[\bp_\eps  \,\vert\, (\bfl_{\eps})_\ell=\sign(\nu_\ell)],\label{eq:realization_V2}\\
&		 \bfell=\E[\bfl_{\eps}  \,\vert\, (\bfl_{\eps})_\ell=\sign(\nu_\ell)] \nonumber.
\end{align}
In particular, for every $k=1,\dots,m$,
\[
        f^{(\ell)}_k =\nu_{k}+(\sign(\nu_{\ell})-\nu_{\ell})\frac{\Cov(f_{\ell},f_{k})}{\Var(f_{\ell})}.
\]
Note that the case $\nu_\ell=0$ has been excluded only for compactness. Indeed, in this special case the variational problem~\eqref{eq:varpro} has two solutions, $\bp^{(\ell,+)}$ and $\bp^{(\ell,-)}$. This can be easily explained by observing that if the power flow on line $\ell$ has mean $\nu_\ell=0$, then it is equally likely for the overload event $\{|f_\ell| \ge 1\}$ to occur as $\{f_\ell \ge 1\}$ or as $\{f_\ell \le -1\}$ and the most likely power injection configurations that trigger them can be different.

The previous proposition immediately yields the large deviations principle also for the first line failure event $\|\bfl_{\eps}\|_{\infty}\ge 1$, which reads
\[
        \lim_{\eps\to 0} \eps \log \P ( ||\bfl_{\eps}||_{\infty} \ge 1 ) =        -\min_{\ell=1,\ldots,m } \frac{(1-|\nu_\ell|)^2}{2\sigma_\ell^2}.
\]
Indeed, the decay rate for the event that at least one line fails is equal to the minimum of the decay rates for the failure of each line. The most likely power injections configuration that leads to the event $\|\bfl_{\eps}\|_{\infty}\ge 1$ is $\bp^{(\ell^*)}$ with $\ell^*=\argmin_{\ell=1,\ldots,m} \frac{(1-|\nu_\ell|)^2}{2\sigma_\ell^2}$.\\

\textit{Proof of Proposition~\ref{prop:first_failure}.}
Let $(\bZ^{(i)})_{i\in\mathbb{N}}$ be a sequence of i.i.d.~ $m$-dimensional multivariate normal vectors $\bZ^{(i)}\sim\N_m ( \bnu, \Sf)$, and let $\bS_k=\frac{1}{k}\sum_{i=1}^k Z^{(i)}$ be the sequence of the partial sums. By setting $\eps=\frac{1}{k}$, it immediately follows that that $\smash{\bfl_{\eps}\ed S_k}$, where $\ed$ denotes equality in distribution. Denote $g(\bp)=\frac{1}{2} (\bp-\bmu)^\top \Sp^{-1} (\bp-\bmu)$. Following~\cite[Section 3.D]{Touchette2009}, we get
\begin{align}
        & \lim_{\eps\to 0} \eps \log \P ( (\bfl_{\eps})_\ell \ge 1 )
        =\lim_{k\to \infty} \frac{1}{k} \log \P ( (\bS_k)_\ell \ge 1 )= \nonumber\\
        &= - \inf_{\bp\in\mathbb{R}^n \,:\, \bhe_\ell^\top \bV \bp \ge 1} g(\bp)
        = - \frac{(1-\nu_\ell)^2}{2\sigma_\ell^2},\label{eq:LDP_linek+}\\
        & \lim_{\eps\to 0} \eps \log \P ( (\bfl_{\eps})_\ell \le -1 )
        =\lim_{k\to \infty} \frac{1}{k} \log \P ( (\bS_k)_\ell \le -1 )= \nonumber\\
        &= - \inf_{\bp\in\mathbb{R}^n \,:\, \bhe_\ell^\top \bV \bp \le -1} g(\bp)
        = - \frac{(-1-\nu_\ell)^2}{2\sigma_\ell^2}.\label{eq:LDP_linek-}
\end{align}
The optimizers of problems \eqref{eq:LDP_linek+} and \eqref{eq:LDP_linek-} are easily computed respectively as as
\begin{align*}
	& \bp^{(\ell,+)} =  \bmu + \frac{(1-\nu_{\ell})}{\sigma_{\ell}^2} \Sp \bV^\top \bhe_{\ell},\\
	& \bp^{(\ell,-)} =  \bmu + \frac{(-1-\nu_{\ell})}{\sigma_{\ell}^2} \Sp \bV^\top \bhe_{\ell}.
\end{align*}

Note that trivially
\begin{align*}
	& \inf_{\bp\in\mathbb{R}^n \,:\, |\bhe_\ell^\top \bV \bp| \ge 1} g(\bp) =\\
	& \qquad =\min\Big \{ \inf_{\bp\in\mathbb{R}^n \,:\, \bhe_\ell^\top \bV \bp \ge 1} g(\bp), \, \inf_{\bp\in\mathbb{R}^n \,:\, \bhe_\ell^\top \bV \bp \le -1} g(\bp)\Big \},
\end{align*}
and thus identities \eqref{eq:LD} and \eqref{eq:varpro} immediately follow. \qed\\

\textit{Proof of Proposition~\ref{prop:mostlikely_injection}.} We have
 \begin{align*}
 &\log \,\mathbb{P}(( \bp_{\eps})_i\notin (p_i^{(\ell)}-\delta,p_i^{(\ell)}+\delta) \, \big\rvert \, |(\bfl_{\eps})_\ell|\ge 1)\\
=&\log\,  \mathbb{P}(( \bp_{\eps})_i\notin (p_i^{(\ell)}-\delta,p_i^{(\ell)}+\delta), \, | (\bfl_{\eps})_\ell|\ge 1)\\
&-\log\P ( |(\bfl_{\eps})_\ell|\ge 1).
 \end{align*}
%
Denote $g(\bp)=\frac{1}{2} (\bp-\bmu)^\top \Sp ^{-1} (\bp-\bmu)$. From large deviations theory, it holds that that
\begin{align}
	&\lim_{\eps\to 0} \eps \log \P( |(\bfl_{\eps})_\ell| \ge 1)=
	 - \inf_{\bp\in\mathbb{R}^n \,:\, |\bhe_{\ell}^\top \bV p| \ge 1} g(p) \label{eq:I_ki_power}\\
	&\lim_{\eps\to 0} \eps \log \P( (\bp_\eps)_i \notin  (p_i^{(\ell)}-\delta,p_i^{(\ell)}+\delta), \, |(\bfl_{\eps})_\ell| \ge 1)= \nonumber \\
	& \quad =-\inf_{\substack{\bp\in\mathbb{R}^n \,:\, |\bhe_{\ell}^\top \bV p| \ge 1,\\ \qquad\quad  |p_i-p^{(\ell)}_i|\ge\delta}} g(p).\label{eq:tildeI_ki_power}
\end{align}
Define the corresponding decay rates as
\begin{equation*}
	I_{\ell}=\inf_{\bp\in\mathbb{R}^n \,:\, |\bhe_\ell^\top \bV \bp| \ge 1} g(\bp), \quad	
	J_{\ell}=\inf_{\substack{\bp\in\mathbb{R}^n \,:\, |\bhe_\ell^\top \bV \bp| \ge 1,\\ \qquad\quad |p_i-p^{(\ell)}_i|\ge\delta}} g(\bp).
\end{equation*}
Then we can rewrite
\begin{equation*}
	\lim_{\eps\to 0}\eps \log \mathbb{P}(( \bp_{\eps})_k\notin (p_i^{(\ell)}-\delta,p_i^{(\ell)}+\delta) \, \big\rvert \, |(\bfl_{\eps})_\ell|\ge 1)=-J_{\ell}+I_{\ell},
\end{equation*}
and, therefore, the claim is equivalent to proving that $J_{\ell}>I_{\ell}$.
Notice that the feasible set of the minimization problem \eqref{eq:tildeI_ki_power} is strictly contained in that of the problem \eqref{eq:I_ki_power}, implying that $J_\ell\ge I_{\ell}$.

Recall that $\bpell$ is the unique optimal solution of \eqref{eq:I_ki_power}, and let $\hat{\bp}^{(\ell)}$ be an optimal solution of \eqref{eq:tildeI_ki_power}. Clearly $\hat{p}^{(\ell)}$ is feasible also for problem \eqref{eq:I_ki_power}. If it was the case that $J_{\ell}=I_{\ell}$, then $\hat{\bp}^{(\ell)}$ would be an optimal solution for \eqref{eq:I_ki_power}, and thus by uniqueness ($g(p)$ is strictly convex) $\hat{\bp}^{(\ell)}=\bpell$. But this leads to a contradiction, since $\hat{\bp}^{(\ell)}$ is by construction such that $|\hat{p}_i-\hat{p}^{(\ell)}_i|\ge\delta$.
Hence $J_\ell > I_\ell$ and we conclude that
\begin{equation*}
		\lim_{\eps\to 0}\eps \log \mathbb{P}(( \bp_{\eps})_i\notin (p_i^{(\ell)}-\delta,p_i^{(\ell)}+\delta) \, \big\rvert \, |(\bfl_{\eps})_\ell|\ge 1)<0. \qed
\end{equation*}

\subsection{Extension to non-Gaussian case}
In this section we briefly describe how to extend the analyis to the non-Gaussian scenario. Consider a model for the power injection vector given by
\begin{equation*}
\bp_{\eps}=\bmu+\sqrt{\eps} \bX,
\end{equation*}
where $\bmu\in\R^n$ and $\bX=(X_1,\ldots,X_n)$ is a random vector with mean $0$ and log-moment generating function
\[\log M(\bs)=\log\E[e^{\langle \bs,\bX \rangle}].\]
The power flows vector is thus given by
$\bfl_{\eps}=\bV \, \bp_{\eps}$.
Define the Fenchel-Legendre (also known as the convex conjugate) transform of $\log M(\bs)$, i.e.
\begin{equation*}
\Lambda^*(\bx)=\sup_{s\in\mathbb{R}^n}(\langle \bs,\bx \rangle-\log M(\bs)).
\end{equation*}
Then, for every $\ell=1,\ldots,m$, the sequence $(\bfl_{\eps})_{\eps>0}$ satisfies the large deviations principle (see~\cite{DemboZeitouni2009})
\begin{equation*}
\lim_{\eps\to 0} \eps \log \P ( |(\bfl_{\eps})_\ell| \ge 1 )=
- \inf_{\bx\in\mathbb{R}^n \,:\,|\bhe_{\ell}^\top \bV (\bmu+\bx)| \ge 1} \Lambda^*(x),
\end{equation*}
and the most likely power injection configuration $\bpell \in \R^n$ given the event $|(\bfl_{\eps})_\ell| \ge 1$ is
\begin{equation*}
	\bpell = \bmu+\arginf_{\bx\in\mathbb{R}^n \,:\, |\bhe_{\ell}^\top \bV (\bmu+\bx)|\ge 1}\Lambda^*(x).
\end{equation*}
The rest of the analysis can then be carried out along similar lines as we did for the Gaussian case.

\section{Power flow redistribution}
For every line $\ell$ define $\mathcal{J}(\ell)$ to be the collection of lines that fail jointly with $\ell$ as
\[
	 \mathcal{J}(\ell) =\{ k ~:~ |f^{(\ell)}_k| \ge 1\}.
\]
Let $j(\ell)=|\mathcal{J}(\ell)|$ be its cardinality and note that $j(\ell) \ge 1$ as trivially $\ell$ always belongs to $\mathcal{J}(\ell)$. Denote by $\widetilde{G}^{(\ell)}$ the graph obtained from $G$ by removing all the lines in $\mathcal{J}(\ell)$.

Let us focus first on the case of the isolated failure of line $\ell$, that is when $\mathcal{J}(\ell)=\{\ell\}$. In this case $\widetilde{G}^{(\ell)}=G(V,E\setminus \{\ell\})$ is the graph obtained from $G$ after removing the line $\ell=(i,j)$. Provided that the power injections remain unchanged, the power flows redistribute among the remaining lines. Using the concept of \textit{effective resistance matrix} $\bR \in\R^{n\times n}$ and under the DC approximation, in~\cite{CKvM16,Schaub2014,Soltan2015} it is proven that alternative paths for the power to flow from node $i$ to $j$ exist (i.e., $\widetilde{G}^{(\ell)}$ is still connected) if and only if $\beta_{i,j} R_{i,j} \neq 1$. In other words, $\beta_{i,j} R_{i,j}=1$ can only occur in the scenario where line $\ell=(i,j)$ is a \textit{bridge}, i.e., its removal results in the disconnection of the original graph $G$ in two components. If $\widetilde{G}^{(\ell)}$ is still a connected graph, the power flows after redistribution $\frv \in \R^{m-1}$ are related with the original line flows $\bfl\in \R^m$ in the network $G$ by the relation
\[
	\bar{f}^{(\ell)}_k=f_k+  f^{(\ell)}_\ell \phi^{(\ell)}_{k}, \quad \text{ for every } k \neq \ell,
\]
where $f^{(\ell)}_\ell = \pm 1$ depending on the way the power flow on line $\ell$ exceeded the threshold $1$. If $\ell=(i,j)$ and $k=(a,b)$ the coefficient $\phi^{(\ell)}_{k} \in \R$ can be computed as
\begin{equation}\label{eq:LODF_def}
	\phi^{(\ell)}_{k}=\phi_{(i,j),(a,b)} = \beta_k \cdot \frac{C_\ell}{C_k} \cdot \frac{R_{a,j}-R_{a,i}+R_{b,i}-R_{b,j}}  {2(1-\beta_\ell R_{i,j})},
\end{equation}
The ratio $C_\ell/C_k$ appears in the latter formula since we work with normalized line power flows and we correspondingly defined $\bm{\phi}^{(\ell)}=\{\phi^{(\ell)}_k\}_{k\neq \ell}$ to be the normalized version of the classical line outage distribution factors (LODF, \cite{Guo2009}).
Moreover, we define the \textit{most likely} power flows configuration $\bfmostlikely\in\R^{m-1}$ after redistribution as

\begin{equation}\label{eq:mostlikely_isolated}
	\fmostlikely_k=f^{(\ell)}_k+ f^{(\ell)}_\ell\phi^{(\ell)}_{k} , \quad \text{ for every } k \neq \ell.
\end{equation}

\subsection{Ring topology} 
We now focus on a particular topology, namely the ring on $n$ nodes, which we use as an illustrative example to show the non-locality of cascades in the Main Body. In this topology, nodes are placed on a ring and each node is connected to its previous and subsequent neighbor. Denote the set of nodes as $\mathcal{N}=\{1,\ldots,m\}$ and the set of lines $\mathcal{L}=\{l_1,\ldots,l_n\}$, where $l_1=(n,1),l_2=(1,2),\ldots,l_n=(n-1,n)$. It is easy to prove that, in a ring network with homogeneous line thresholds and unitary susceptances, $\phi_{\ell,k}=-1$ for every $\ell \neq k$.

\begin{lem}
Consider a ring network with homogeneous line thresholds ($C_\ell=C$ for every line $\ell$) and homogeneous unitary susceptances ($\beta_{\ell}=1$ for every line $\ell$). Then
\begin{enumerate}
\item[i)] The effective resistance between a pair of nodes $i,j$ is given by
\begin{equation}\label{eq:R_circuit}
R_{i,j}=\frac{|j-i|(n-|j-i|)}{n}.
\end{equation}
\item[ii)] For every pair of lines $l_{k}=(k-1,k),l_{\ell}=(\ell-1,\ell)$, with $k\neq \ell$, the LODF is constant and equal to
\[
	\phi_{(k-1,k),(\ell-1,\ell)}=-1.
\]
\end{enumerate}
\end{lem}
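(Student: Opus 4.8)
The plan is to establish both assertions by elementary electrical-network reasoning, avoiding any spectral computation of $\bL^{+}$.

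For part (i) I would use the series--parallel structure of a cycle. Fixing nodes $i\neq j$, the ring $C_{n}$ is the union of exactly two internally vertex-disjoint $i$--$j$ paths, one with $|j-i|$ edges and the other with $n-|j-i|$ edges; since $\beta_{\ell}=1$ each edge has unit resistance, so by the series law the two paths have resistances $|j-i|$ and $n-|j-i|$, and being in parallel between $i$ and $j$ they combine to $R_{i,j}=\bigl(|j-i|^{-1}+(n-|j-i|)^{-1}\bigr)^{-1}=|j-i|(n-|j-i|)/n$, which is \eqref{eq:R_circuit} (the case $i=j$ giving $0$ on both sides). The only subtlety worth a remark is that the ring distance between $i$ and $j$ may be taken as either $|j-i|$ or $n-|j-i|$; since the right-hand side of \eqref{eq:R_circuit} is invariant under $|j-i|\mapsto n-|j-i|$, this ambiguity is harmless.

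For part (ii) I would argue through the cycle space rather than substituting \eqref{eq:R_circuit} into \eqref{eq:LODF_def}. Deleting $l_{\ell}$ from the ring leaves a path, hence a spanning tree; on a tree Kirchhoff's current law alone pins down every line flow from the nodal injections, so the redistributed flow $\frv$, extended by $\bar{f}^{(\ell)}_{\ell}=0$, is the unique vector in $\R^{n}$ that vanishes on $l_{\ell}$ and is in current balance with the pre-failure injections. The original flow $\bfl$ is in current balance with the same injections, so $\frv-\bfl$ lies in the cycle space of $C_{n}$; this space is one-dimensional, and because the orientation $l_{j}=(j-1,j)$ runs consistently around the ring it is spanned by $(1,\dots,1)^{\top}$. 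Writing $\frv-\bfl=c\,(1,\dots,1)^{\top}$ and reading off the $\ell$-th coordinate gives $c=-f_{\ell}$, so $\bar{f}^{(\ell)}_{k}=f_{k}-f_{\ell}$ for every $k\neq\ell$. Choosing any injection pattern with $f_{\ell}\neq 0$ (e.g.\ a unit injection at node $\ell-1$ withdrawn at node $\ell$, which by part (i) and the current-divider rule puts flow $(n-1)/n\neq 0$ on $l_{\ell}$) and comparing with the defining relation $\bar{f}^{(\ell)}_{k}=f_{k}+f_{\ell}\,\phi^{(\ell)}_{k}$ forces $\phi^{(\ell)}_{k}=-1$; since nothing in the argument used which line was removed, $\phi_{(k-1,k),(\ell-1,\ell)}=-1$ for all $k\neq\ell$. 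As an independent check one can plug \eqref{eq:R_circuit} directly into \eqref{eq:LODF_def}: the denominator equals $2(1-R_{i,j})=2/n$ (with $(i,j)$ the endpoints of the failed line), while the numerator $R_{a,j}-R_{a,i}+R_{b,i}-R_{b,j}$ is twice the flow a unit $i\to j$ injection induces on line $(a,b)$, which is $-2/n$, again giving $-1$.

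I expect the only real difficulty to be bookkeeping: keeping the edge orientations, the wrap-around at $l_{1}=(n,1)$, and the signs inside \eqref{eq:LODF_def} mutually consistent so that the redistribution factor comes out as $-1$ rather than $+1$. The conceptual ingredients -- series--parallel reduction for (i) and one-dimensionality of the cycle space for (ii) -- are immediate; it is the sign conventions that must be nailed down, and the triangle $n=3$ is a handy consistency check.
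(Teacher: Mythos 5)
Your proof is correct, and it takes a genuinely different route from the paper's. For part (i) the paper simply cites an identity from Bapat's lecture notes, whereas you derive \eqref{eq:R_circuit} from scratch by series--parallel reduction of the two internally disjoint $i$--$j$ paths; this makes the lemma self-contained and your remark about the invariance under $|j-i|\mapsto n-|j-i|$ correctly disposes of the only ambiguity. For part (ii) the paper's argument is exactly your ``independent check'': specialize $R_{i,j}=(n-1)/n$ for adjacent nodes and substitute into \eqref{eq:LODF_def} to get $-2/(2n(1-\tfrac{n-1}{n}))=-1$. Your primary argument is instead structural: deleting one edge of the cycle leaves a spanning tree, on which Kirchhoff's current law determines the flow uniquely, so $\frv-\bfl$ (with $\bar f^{(\ell)}_\ell=0$) must lie in the one-dimensional cycle space spanned by $(1,\dots,1)^\top$ under the orientation $l_j=(j-1,j)$, forcing $\bar f^{(\ell)}_k=f_k-f_\ell$ and hence $\phi^{(\ell)}_k=-1$. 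This buys two things the paper's computation does not: it explains \emph{why} the LODF is $-1$ (the diverted flow has no choice but to traverse the unique remaining path in the reverse cyclic direction), and it does not actually need part (i) or formula \eqref{eq:LODF_def} at all, at the mild cost of having to fix the sign conventions of the cycle vector and of implicitly using that the (slack-adjusted) injections are balanced so that a KCL-feasible flow on the tree exists. Both arguments are valid; yours is the more illuminating, the paper's the shorter.
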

\begin{proof}
i) See identity $(4)$ in~\cite{Bapat2006}.
ii) First, observe that the effective resistance between two adjacent nodes $i$ and $j=i+1$ in a circuit graph is equal to $R_{i,j}=\frac{n-1}{n}$, thanks to Eq.~\eqref{eq:R_circuit}. After a straightforward calculation, and using that $\beta_{\ell}=1, C_\ell=C$ for all lines $\ell$, Eq.~\eqref{eq:LODF_def} becomes
\[
	\phi_{(k-1,k),(\ell-1,\ell)}= -\frac{2}{2n\Bigl(1-\frac{n-1}{n}\Bigr)}=-1.\qedhere
\]
\end{proof}

\subsection{General topology} 
Going back to the case of a general network topology and any type of failures, isolated or joint, the power flows after redistribution $\frv \in \R^{m-j(\ell)}$ are related with the power injections $\bp \in \R^n$ by the relation
\[
	\frv = \widetilde{\bV}^{(\ell)} \bp,
\]
where the  $(m-j(\ell)) \times n$ matrix $\widetilde{\bV}^{(\ell)}$ can be constructed analogously to $\bV$, but considering the altered graph $\widetilde{G}^{(\ell)}$ instead of $G$.
We define the \textit{most likely} power flow configuration $\bfmostlikely$ after redistribution as
\[
	\bfmostlikely = \widetilde{\bV}^{(\ell)} \bpell,
\]
which generalizes Eq.~\eqref{eq:mostlikely_isolated} to any kind of failure, isolated of joint.
The next proposition shows that it is enough to look at the vector $\bfmostlikely$ to determine whether a line that survived at the first cascade stage (i.e., that did not fail jointly with $\ell$) will fail with high probability or not after the power redistribution (i.e., at the second cascade stage).

\begin{customthm}{3}\label{prop:second_failure}
Assume that $\max_{k=1,\ldots,m}|\nu_k|<1$, and that $\nu_{\ell}\neq 0$. Then, for all lines $k\in E\setminus \mathcal{J}(\ell)$, and for all $\delta>0$,
\[
		\lim_{\eps\to 0}\eps \log \mathbb{P}((\frv_{\eps})_k\notin (\fmostlikely_k-\delta,\fmostlikely_k+\delta) \, \big\rvert \, |(\bfl_{\eps})_\ell|\ge 1)<0.
\]
In particular, if $|\fmostlikely_k|\ge 1$, then
\[
		 \mathbb{P}(|(\frv_{\eps})_k|\ge 1\, \big\rvert \, |(\bfl_{\eps})_\ell|\ge 1)\to 1\quad\text{ as }\eps\to 0,
\]
exponentially fast in $1/\eps$.
\end{customthm}

\textit{Proof.} 
Let $A_{\eps,k}$ denote the event $A_{\eps,k}=\{(\frv_{\eps})_k\notin (\fmostlikely_k-\delta,\fmostlikely_k+\delta)\}$, and define $Q=\lim_{\eps\to 0}\eps \log \mathbb{P}(A_{\eps,k}\, \big\rvert \, |(\bfl_{\eps})_\ell|\ge 1)$.
The proof that $Q<0$ is analogous to the proof of Prop.~\ref{prop:mostlikely_injection}.
For the second part, it follows from $\lim_{\eps\to 0}\eps \log \mathbb{P}(A_{\eps,k}\, \big\rvert \, |(\bfl_{\eps})_\ell|\ge 1)=Q$ that for every $\eta>0$ there exists a $\bar{\eps}$ such that, for every $\eps <\bar{\eps}$,
\[Q-\eta\le\eps \log \mathbb{P}(A_{\eps,k}\, \big\rvert \, |(\bfl_{\eps})_\ell|\ge 1) \le Q+\eta,\]
and thus
\[\exp\Bigl(\frac{Q-\eta}{\eps}\Bigr)\le \mathbb{P}(A_{\eps,k}\, \big\rvert \, |(\bfl_{\eps})_\ell|\ge 1)\le\exp\Bigl(\frac{Q+\eta}{\eps}\Bigr).\]
Let $A_{\eps,k}^c$ denote the complementary event of $A_{\eps,k}$.
Since $|\fmostlikely_k|\ge 1$, for $\delta$ sufficiently small we have
\[
A_{\eps,k}^c=\{|(\frv_{\eps})_k-\fmostlikely_k|<\delta\}\subseteq\{|(\frv_{\eps})_k|\ge 1\},\]
yielding
\begin{align*}\mathbb{P}(|(\frv_{\eps})_k|\ge 1\, \big\rvert \, |(\bfl_{\eps})_\ell|\ge 1)\ge \,
&\mathbb{P}(A_{\eps,k}^c \,\big\rvert \, |(\bfl_{\eps})_\ell|\ge 1)\\
 \ge\, & 1-\exp\Bigl(\frac{Q+\eta}{\eps}\Bigr).
\end{align*}
Since $Q<0$, the result follows. \qed
\section{Application: SciGRID German network}
We now demonstrate our methodology in the case of a real-world power grid and a realistic system state.
\subsection{Dataset Description}
We perform our experiments using PyPSA, a free software toolbox for power system analysis~\cite{PyPSA2017}. We use the dataset described in \cite{SCIGRID1,SCIGRID2}, which provides a model of the German electricity system based on SciGRID and OpenStreetMap~\cite{SCIGRID00,OpenStreetMap2017}.

The dataset includes load/generation time series and geographical locations of the nodes, differentiating between renewable and conventional generation. It also provides data for transmission lines limits, transformers, generation capacity and marginal costs, allowing us to couple our theoretical analysis with realistic Optimal Power Flow (OPF, \cite{Huneault1991}) computations. The time-series provide hourly data for the entire year $2011$. For more technical information on the dataset, we refer to \cite{SCIGRID1,SCIGRID2}.

The SciGRID German network consists of $585$ buses, $1423$ generators including conventional power plants and wind and solar parks, $38$ pump storage units, $852$ lines and $96$ transformers. For the analysis carried out in this paper, storage units are not included and we exclude transformer failures.
The renewable generators are divided in three classes, solar, wind onshore and wind offshore. Each bus can house multiple generators, both renewable and conventional, but it is limited to at most one renewable generator for each class. Let $\N_{\text{w.off}},\N_{\text{w.on}},\N_{\text{sol}}$ denote the set of buses housing, respectively, wind offshore, wind onshore and solar generators, with $|\N_{\text{w.off}}|=5,\,|N_{\text{w.on}}|=488,\,|N_{\text{sol}}|=489$, and $\N_{\text{w.off}}\subseteq \N_{\text{w.on}}\subseteq \N_{\text{sol}}$. The remaining $96$ buses house $441$ conventional generators.

Let $n_s=489$ denote the total number of buses housing renewable generators. If a bus houses both renewable and conventional generators, it will be considered a \textit{stochastic} bus for our decomposition formulation.
We model stochastic net power injections by means of a multivariate Gaussian random vector $\bp_s\sim \mathcal{N}_{n_s}(\bmu_s,\eps\Sp )$. 

The distinction between the noise parameter $\eps$ and the covariance matrix $\Sp$ is relevant only for the theoretical analysis (where we take the limit $\eps \to 0$ while the matrix $\Sigma_p$ is fixed), since as far as the numerical case study is concerned, all the results are obtained by using the product $\eps\Sp$, which is directly estimated from the SciGRID data. In the following, we will thus take $\eps=1$ and refer to the covariance matrix of $\bp_s$ simply as $\Sp$. 

\subsection{Data-based model for $\mathbf{\mu}_s$}
In order to get a realistic nominal line flows value $\bnu$, we perform a linear OPF relative to the day $01/01/2011$, for different hours of the day. A linear OPF consists of minimizing the total cost of generation, subject to energy balance, generation and transmission lines constraints, under the assumptions of the DC approximations.
In order to model a heavily-loaded but not overloaded system, in the OPF we scale the true line limits $C_{\ell}$ by a contingency factor of $\lambda=0.7$. This is a common practice in power engineering that allows room for reactive power flows and stability reserve.

More precisely, let $g(i)$ be the generation at bus $i$ as outputted by the OPF for a given hour, and let us write it as $g(i)=g_r(i)+g_d(i)$, with $g_r(i)$ the power produced by renewable generators attached to the bus, and $g_d(i)$ the power supplied by conventional generators. If the demand at bus $i$ is given by $d(i)$, then the average stochastic power injection vector $\bmu_s\in\R^{n_s}$ is modeled as
\[
({\bmu_s})_{i}=g_r(i)+g_d(i)-d(i),\quad i=1,\ldots,n_s,
\]
while the deterministic power injection reads $p_i=g_d(i)-d(i)$ for $i\in\mathcal{I}_d$.

\subsection{Data-based model for $\mathbf{\Sigma}_p$}
In order to model the fluctuations of renewable generation around the nominal values, and thus estimate $\Sp $, we use realistic hourly values of wind and solar energy production to fit a stochastic model. We then use the steady-state covariance of the model residuals as an estimate for $\Sp $.
Following~\cite{Milligan2003}, we choose to use AutoRegressive-Moving-Average (ARMA) models, which we describe in details below.

Note that we do not aim to find the best possible stochastic model for renewable generation, which is beyond the scope of this paper, but instead to provide an estimate for the covariance matrix $\Sp $ in order to validate our theoretical results, which are asymptotically valid in a small-noise regime. We speculate that more sophisticated models, and/or data on smaller time-scales, may lead to smaller values for the correlations in $\Sp$, thus getting closer to the small noise limit. 

We now describe the estimation procedure for $\Sp $ (as mentioned before we normalize $\eps=1$ in our empirical study). 
The SciGRID dataset contains time series
\[\by_{\text{w.off}}\in\R^{M\times 5},\by_{\text{w.on}}\in\R^{M\times 488},\by_{\text{sol}}\in\R^{M\times 489},\]
for the available power output of wind offshore, wind onshore and solar generators, for each hour of the year $2011$, accounting for a total of $M=8760$ measurements for each generator~\cite{SCIGRID1}.
For each time series, $y_{(\cdot)}(t,j)$ denote the available power output at time $t$ for the $j$-th generator of a given type, in MW units.

%
%
%

\subsubsection{Wind power model}
As a pre-processing step, we merge together the two time series $\by_{\text{w.off}},\by_{\text{w.on}}$ by summing up the onshore and offshore wind power at the buses $\N_{\text{w.off}}\subseteq \N_{\text{w.on}}$. This yields the time series of wind power production
\[y_\text{w}(t,j)=y_{\text{w.on}}(t,j)+\mathds{1}_{j\in \N_{\text{w.off}}}y_{\text{w.off}}(t,j),\]
where $\mathds{1}_{\{\}}$ is the indicator function of the event in the bracket, taking value $1$ if the event is satisfied, and $0$, otherwise.

We select one portion of the data $\{1,\ldots,T\}\subseteq \{1,\ldots,M\}$, corresponding to the month of January, to be used to fit the model. For each windpark $j$, following~\cite{Milligan2003} we consider an ARMA(1,24) model of the form
\begin{align*}
x(t,j) =& a_{1,j}x(t-1,j)+e(t,j)\\
+& m_{1,j}e(t-1,j)+\ldots+m_{24,j} e(t-24,j),
\end{align*}
where $x(t-1,j)$ is the auto-regressive term, and $e(t-k,j)$, $k=1,\ldots,24$, are the white-noise error terms. For each windpark $j$, we fit the above model to the wind power data  $\{y_\text{w}(t,j)\}_{t=1:T}$ in R using the function \textit{arima},
and consider the time series of the residuals $e_\text{w}(1,j),\ldots,e_\text{w}(T,j)$.

The empirical variance of the residuals is used as proxy for the variance of the output of windpark $j$, namely
\[
	(\bSigma_\text{w})_{jj}=\widehat{\Var}(e_\text{w}(1,j),\ldots,e_\text{w}(T,j)),
\]
where $\widehat{\Var}$ denotes the empirical variance. In a similar way, the empirical covariance of the residuals is used to model the covariance between the output of windparks $i$ and $j$, $i\neq j$, namely
\[
	(\bSigma_\text{w})_{ij}=\widehat{\Cov}\Bigl(\{{e_\text{w}(t,i)}\}_{t=1:T},\{{e_\text{w}(t,j)}\}_{t=1:T}\Bigr),
\]
where $\widehat{\Cov}$ denotes the empirical covariance.
\subsubsection{Solar power model}
State-of-the-art models for solar irradiance often combine statistical techniques with cloud motion analysis and numerical weather prediction (NWP) models, see~\cite{Antonanzas2016} for a review. 
Since the available data in our case study are limited to historical records for power production of solar generators, and do not include any weather data, we used the purely statistical model ARMA($p$,$q$), which has been used succesfully in~\cite{Huang2012}.

Regarding the orders $p,q$ of the ARMA model, after some exploratory analysis we decided to use an ARMA(24,24) model with all parameters fixed to $0$, except for the ones corresponding to the seven hours before, and the one corresponding to twenty-four hours before. The rationale behind this choice is that by using the value corresponding to twenty-four hours before, we capture the dependency on the hour of the day, while the values from $7$ hours before capture the shape of the current day. 

More precisely, the model reads
\begin{align*}
x(t,j)& =a_{1,j}x(t-1,j)+\ldots+a_{7,j}x(t-7,j)\\
       &\quad +a_{24,j}x(t-24,j)\\
       &\quad +e(t,j)+m_{1,j}e(t-1,j)+\ldots+ m_{7,j} e(t-7,j)\\
       &\quad+m_{24,j} e(t-24,j).
\end{align*}
For each solar park $j$, we fit the above model to the solar power data
$(y_{\text{sol}}(t,j))_{t=1:T}$,
using again the R function \textit{arima},
and consider the time series of the residuals
$(e_{\text{sol}(t,j)})_{t\in\mathcal{D}},$
where $\mathcal{D} \subseteq \{1,\ldots, T\}$ denotes the set of daylight hours of January $2011$.
The covariance matrix for the solar power generation is obtained as
\[{(\bSigma_{\text{sol}})}_{ij}=\widehat{\Cov}\Bigl({(e_{\text{sol}}(t,i))}_{t\in\mathcal{D}},{(e_{\text{sol}}(t,j))}_{t\in\mathcal{D}}\Bigr).\]

Since we perform numerical experiments for different hours of the day $01/01/2011$, we need to model renewable fluctuations taking into account whether or not we consider a daylight hour, as there is no solar energy production before sunrise and after sunset. In view of this, and assuming that the residuals for the wind and solar models are independent (see~\cite{Zhang2013corr}), we model the covariance matrix relative to an hour $h$ as
\[\Sp (h)= \bSigma_{\text{w}}+\mathds{1}_{h\in \mathcal{D}_1}\bSigma_{\text{sol}},\]
where $\mathcal{D}_1 \subseteq \{1,\ldots,24\}$ denotes the set of daylight hours of $01/01/2011$.

The magnitude of power injections noise at bus $i$ is quantified by the standard deviation $\sqrt{(\Sp)_{ii}}$, expressed as a percentage of the combined installed capacity of wind and solar generators located at bus $i$\footnote{We note that normalizing the error using the installed capacity of a generator is standard in the literature~\cite{Hodge2011}.}. In our numerical study, we find that for daylight hours the mean of these standard deviations across all buses is $8.5\%$, while during nighttime the mean reduces to $5\%$.
%

\subsection{Data-based model for $\mathbf{\Sigma}_f$}
In view of Eq.~\eqref{eq:Sf}, the covariance matrix for the line power flows $\bfl_s$ is calculated as
$\Sf=\bV_s\Sp \bV_s^\top$. The magnitude of power flows noise is quantified by the standard deviations $\sigma_\ell=\sqrt{(\Sf)_{\ell\ell}}$. Since the nominal values for the power flows $\nu_{\ell}$ have been standardized as fractions of line thresholds, and thus range within the interval $[-1,1]$, the values of $\sigma_{\ell}$ describe the magnitude of the power flows noise as a percentage of the corresponding line threshold.  
 In our numerical study, we find that for daylight hours the power flow standard deviations lie within the range $[0.00007,0.14219]$, with mean $0.0228$, while during nighttime the range is $[0.00001,0.14203]$, with mean $0.0131$.

\subsection{German network: Ranking of most vulnerable lines}
In Figs.~\ref{fig:flows}-\ref{fig:prob_11am} are reported, respectively, a heatmap for the values of normalized line flows $|\nu_{\ell}|$ and for the true failure probabilities
$\mathbb{P}(|f_{\ell}|\ge 1)$
for every transmission line in the German network, relative to the hour $11$am of the day $01/01/2011$, and for an effective line limit factor of $\lambda=0.7$.

\begin{figure}[!htb]
     \centering
\includegraphics[width=1\columnwidth]{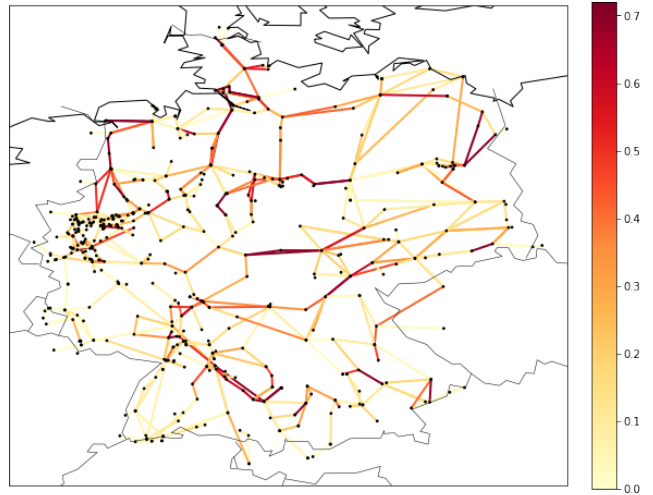}
\caption{\footnotesize Heatmap visualizing the nominal power flows values $\nu_{\ell}$ for the German network at $11$am.}\label{fig:flows}
\end{figure}

\begin{figure}[!htb]
     \centering
  \includegraphics[width=1\columnwidth]{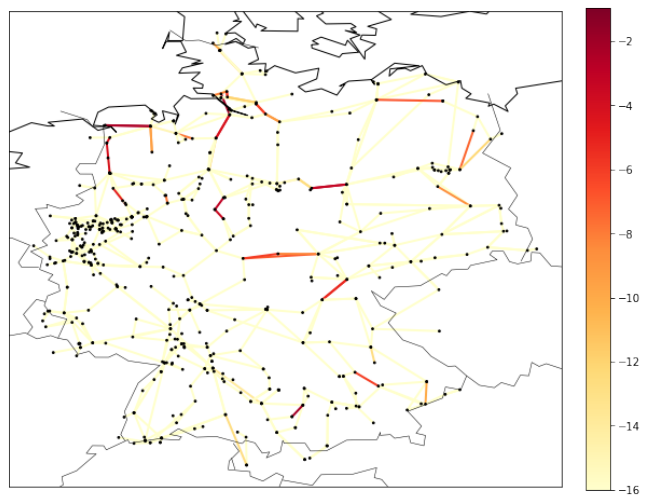}
	\caption{\footnotesize Heatmap visualizing the logarithm of the exact overload probabilities $\log_{10} \mathbb{P}(|f_{\ell}|\ge 1)$ for the German network at $11$am.}\label{fig:prob_11am}
\end{figure}



\newpage
By comparing Fig.~\ref{fig:flows} and Fig.~\ref{fig:prob_11am}, we see that a large $|\nu_{\ell}|$ does not necessarily imply a higher chance of failure, suggesting that decay rates are a better indicator of system vulnerabilities.
The most likely line to fail is line $361$, which connects two buses housing wind farms (\textit{EON Netz} and \textit{Umspannwerk Kraftwerk Emden}). This line is at capacity ($|\nu_{361}|=0.7$) and has the highest standard deviation ($\sigma_{361}=0.142$). However, we notice that a large nominal value of $|\nu_{\ell}|$ does not necessarily imply a high chance of failure.
For instance, several lines in the south of Germany have a moderate to high value $|\nu_{\ell}|$, see Fig.~\ref{fig:flows}. In particular, line $310$, which connects buses
\textit{Vöhringen Amprion} and \textit{Umspannwerk Dellmensingen},
is at capacity ($|\nu_{310}|=0.7$), but ranks only $66$-th out of $852$ lines,
with a power flow standard deviation almost one order of magnitude lower than the standard deviation of the most likely line to fail ($\sigma_{310}=0.0182$).

Fig.~\ref{fig:5ranking} depicts the $5\%$ most likely lines to fail, ranked according
to the large deviations decay rates $I_{\ell}= \frac{(1-|\nu_\ell|)^2}{2\sigma_\ell^2}$, where $\sigma_{\ell}=(\Sf)_{\ell\ell}=(\bV\Sp \bV^\top)_{\ell\ell}$. The ranking based on the large deviations approximation successfully recovers the most likely lines to fail, and, in fact, yields the same ordering as the one based on exact probabilities. As an illustration, in Table~\ref{tab:ranking} are reported the indexes, the exact failure probabilities and the decay rates for the $20$ most likely lines to fail at $11$ am.
\begin{table}[h!]
\centering
\begin{tabular}{c|c|c}
\hline
\hline
$\ell$ &  $\, \mathbb{P}(|f_{\ell}|\ge 1) \,$ &$I_{\ell}$ \\
\hline
361  &  1.743e-02 & 2.225 \\
803  &  8.228e-04 & 4.954\\
19   &  6.783e-04 & 5.132\\
27   &  6.033e-04 & 5.240\\
389  &  4.503e-04 & 5.511\\
390  &  4.460e-04 & 5.520\\
670  &  3.527e-04 & 5.737\\
809  &  7.575e-05 & 7.177\\
586 &   5.574e-05 & 7.466\\
587  &  5.454e-05 & 7.486\\
810  &  2.496e-05 &  8.225\\
712  &  6.440e-06 & 9.514 \\
682  &  5.337e-06 & 9.693\\
683  &  5.318e-06 & 9.697\\
714  &  3.876e-06 & 9.999\\
715  &  1.052e-06 &  11.249\\
554 &   4.267e-07 &  12.117\\
488  &  4.209e-07 & 12.130\\
707  &  1.199e-07 & 13.341\\
818 &   1.199e-07 &  13.341\\
\hline
\hline
\end{tabular}
\caption{Line indexes, exact failure probabilities and decay rates for the $20$ top most vulnerable lines, $11$ am.}
\label{tab:ranking}
\end{table}

\begin{figure}[t!]
    \centering
    \includegraphics[width=1\columnwidth]{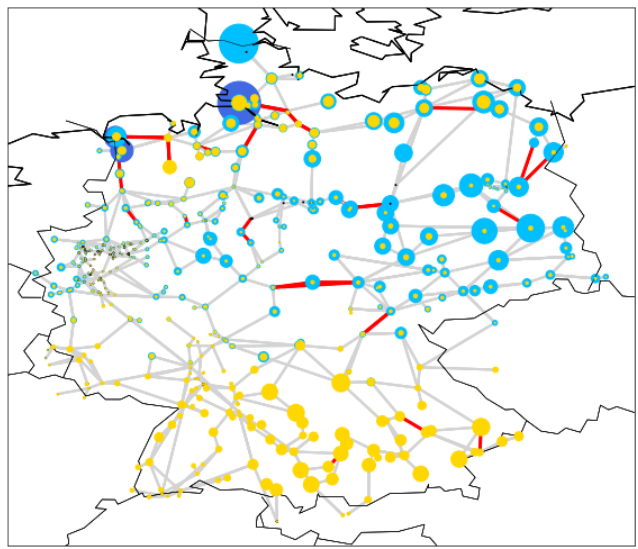}
   \caption{\footnotesize  Top $5\%$ most likely lines to fail in the German network at $11$ am. The buses housing stochastic power injections have different colors depending on the type of renewable sources (blue for wind offshore, light blue for wind onshore and yellow for solar) and sizes
proportional to the absolute values of the corresponding nominal
injections. }
        \label{fig:5ranking}
\end{figure}

The large-deviations-based ranking provides a parsimonious way to detect vulnerable lines, and can be used to appreciate qualitative differences among different hours of the day.
Table \ref{tab:percentages} lists values for total generation ($G$), and generation mix for different hours of the day ($p_{\text{w.off}}$ for wind offshore, $p_{\text{w.on}}$ for wind onshore and $p_{\text{s}}$ for solar). For example, in the morning there is more solar generation and moderate demand, while in the afternoon there is zero solar generation and higher demand.
\begin{table}[!h]
\begin{tabular}{c|c|c|c|c}
\hline
\hline
\text{hour}  & G & \, \, $p_{\text{w.off}}$ \, \, & \, \, $p_{\text{w.on}}$ \, \, & $p_{\text{sol}}$ \,\,\\
\hline
0 \text{am}& 51.75 \text{GW} & 1.7 \%& 35.6 \%& 0.0\%\\
4 \text{am}& 44.71 \text{GW} & 2.0 \%& 45.6 \%& 0.0\%\\
8 \text{am}& 44.83 \text{GW} & 4.5 \%& 44.7 \%& 8.1\%\\
11 \text{am}&  52.52  \text{GW} & 4.4 \%& 32.4 \%& 17.3\%\\
4 \text{pm} &  57.56  \text{GW} &4.1 \%& 23.9 \%& 0.0\%\\
8 \text{pm} &  54.74  \text{GW} & 4.1 \%& 22.9 \%& 0.0\%\\
\hline
\hline
\end{tabular}
\caption{\footnotesize  Total generation and renewable percentages for different hours of the day.}
\label{tab:percentages}
\end{table}

In Fig.~\ref{fig:mostlikeley_mix} the top $5\%$ most likely lines to fail are depicted (in red) for four different hours of the day, together with the nominal values outputted by the OPF for renewable generation.
By comparing Figs.~\ref{Fig:mostlikely_mix_Data1}-\ref{Fig:mostlikely_mix_Data2} and Figs.~\ref{Fig:mostlikely_mix_Data3}-\ref{Fig:mostlikely_mix_Data4}, for example, we see how solar generation is responsible for an increased number of vulnerable lines in in the south of Germany.

\begin{figure}[t!]
    \centering
    \begin{subfigure}[t]{0.48\columnwidth}
        \centering
        \includegraphics[width=1\columnwidth]{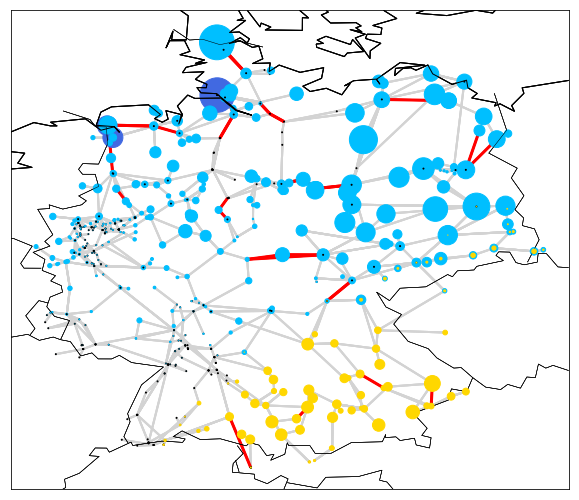}
       \caption{\footnotesize     $8$ am.}\label{Fig:mostlikely_mix_Data1}
        \label{fig:}
    \end{subfigure}%
    ~
    \begin{subfigure}[t]{0.48\columnwidth}
        \centering
        \includegraphics[width=1\columnwidth]{5percent_mix_11am_NEW.png}
 \caption{\footnotesize     $11$ am.}\label{Fig:mostlikely_mix_Data2}
    ~
      \end{subfigure}%
      \vskip\baselineskip
    \begin{subfigure}[t]{0.48\columnwidth}
        \centering
        \includegraphics[width=1\columnwidth]{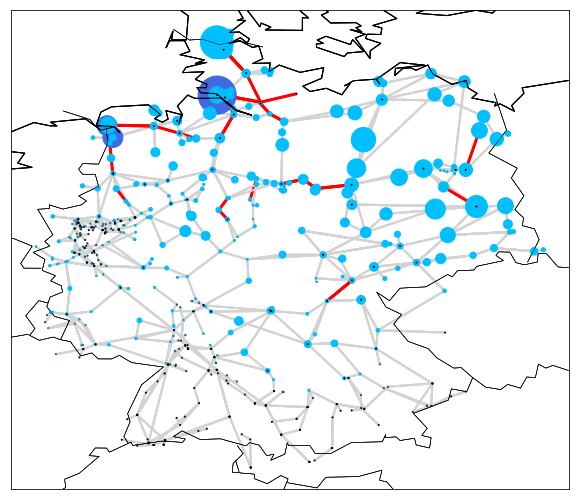}
 \caption{\footnotesize     $4$ pm.}\label{Fig:mostlikely_mix_Data3}
  \end{subfigure}
    ~
    \begin{subfigure}[t]{0.48\columnwidth}
        \centering
        \includegraphics[width=1\columnwidth]{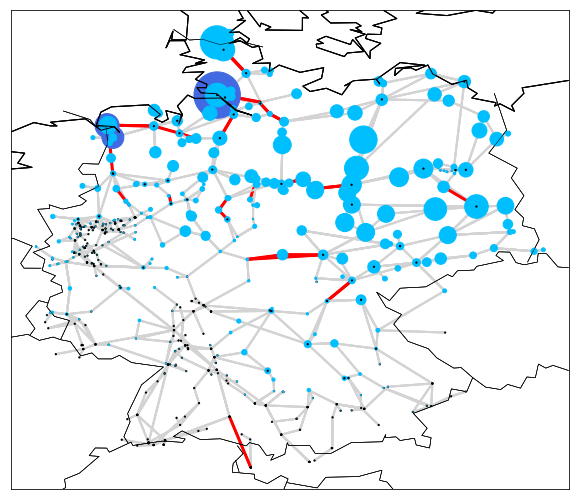}
 \caption{\footnotesize     $8$ pm.}\label{Fig:mostlikely_mix_Data4}
     \end{subfigure}
\caption{\footnotesize     Top $5\%$ most likely lines to fail (in red), together with nominal stochastic generation values.  The buses housing stochastic power injections have different colors depending on the type of renewable sources (blue for wind offshore, light blue for wind onshore and yellow for solar) and sizes
proportional to the absolute values of the corresponding nominal
injections.}
\label{fig:mostlikeley_mix}
\end{figure}

\subsection{German network: Most likely power injections}
In order to keep the notation light, in the following two subsections we omit the subscript $s$ (which refers to stochastic power injections) from the vectors $\bmu,\,\bpell,\,\bp_{\eps},\,\prv_{\eps}$.

The small-noise regime theoretical power injections configuration responsible for the failure of line $\ell$, as given by
Eq.~\eqref{eq:realization_V2}, reads $\bpell=\E[\,\bp_{\eps}\, | \,(\bfl_{\eps})_{\ell} =\sign(\nu_{\ell})]$. As an illustration, Fig.~\ref{fig:mostlikely_11am} depicts $\bpell$ leading to the isolated failure of line $720$. The bus sizes reflect how much $\bpell$ deviates from $\bmu$, and the color-coding uses red for positive deviations, blue for negative ones. 

In order to validate the accuracy of the large-deviations approach, we compare $\bpell$ to the pre-limit conditional expectation of power injections given the failure of line $\ell$, namely
\[\prv_{\eps}=\E[\,\bp_{\eps}\, \vert \,|(\bfl_{\eps})_\ell| \ge 1],\]
which according to Prop.~\ref{prop:mostlikely_injection} converges to $\bpell$ in the limit as $\eps \to 0$.
As a measure of error, we consider, for each line $\ell$,
\begin{align*}
\text{err}(\ell)=\frac{1}{n_s}\sum_{i=1}^{n_s} \Bigl|\frac{(\bp^{(\ell)})_{i}-(\prv_{\eps})_i}{\mu_i}\Bigr|,
\end{align*}
which quantifies the difference between $\bp^{(\ell)}$ and $\prv_{\eps}$,  expressed as a percentages of the nominal values $\bmu$, averaged across all stochastic nodes. We found that, for the same hour as in Fig.~\ref{fig:mostlikely_11am}, the average error across all lines is $\widehat{\text{err}}=\frac{1}{m}\sum_{\ell=1}^m \text{err}(\ell)=0.2\%$, with a maximum value of $2.6\%$, see Fig.~\ref{fig:error_realization_16}. Table~\ref{tab:err_mostlikely} shows that the errors are uniformly small across different hours.

\begin{figure}[!htb]
     \centering
            \includegraphics[width=1\columnwidth]{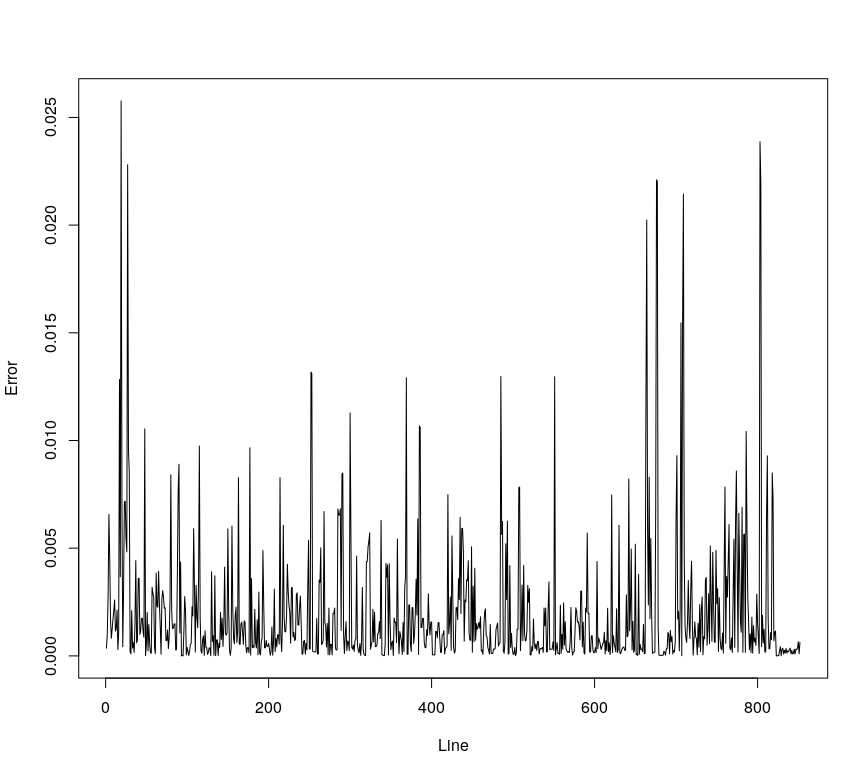}

\caption{\footnotesize  Relative error $\text{err}(\ell)$ at $11$ am, for $\ell=1,\ldots,852$.}\label{fig:error_realization_16}
\end{figure}

\begin{table}[h!]
\begin{tabular}{c|c|c}
\hline
\hline
\text{hour}   \, \,& \,\, $\widehat{\text{err}}$\ \, \, & \, \, $\max \text{err}(\ell)$ \, \\
\hline
4 \text{am}& 0.1\%  & 1.5\% \\
8 \text{am}& 0.4\%  & 4.6\% \\
11 \text{am}& 0.2\%  & 2.6\% \\
4 \text{pm}& 0.1\%  & 2.3\% \\
\hline
\hline
\end{tabular}
\caption{\footnotesize  Average and maximum $\text{err} (\ell)$ for different hours.}
\label{tab:err_mostlikely}
\end{table}

%

%

 \begin{figure}[t!]
    \centering
   \includegraphics[width=1\columnwidth]{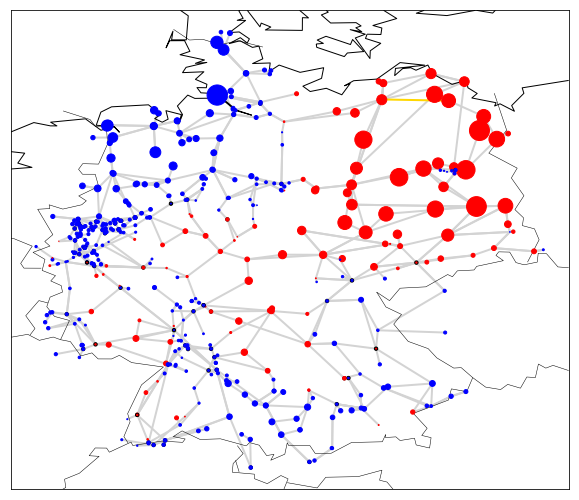}
\caption{\footnotesize Representation of the most likely power injection $\bpell$
causing the isolated failure of the line $\ell = 720$ (orange) at $11$ am. The bus sizes reflect how much $\bpell$ deviates from $\bmu$ (red for positive deviations, blue for negative ones).}
\label{fig:mostlikely_11am}
\end{figure}

\subsection{German network: Failure propagation}
Fig.~\ref{fig:non_local_4pm} shows the emergent isolated failure of line $27$. Such a line is the most likely to fail among those which upon failure do not disconnect the network and trigger subsequent failures; specifically, the failure of line $27$ (in red) causes six more lines $\{k_{1},\ldots,k_{6}\}$ to fail (in orange).
This example shows how the failure spreads non-locally: in particular, lines $316$ and line $602$ in the south of Germany are $394$ Km and $517$ Km far from the original failure of line $27$.
 
\begin{figure}[!htb]
     \centering
  \includegraphics[width=1\columnwidth]{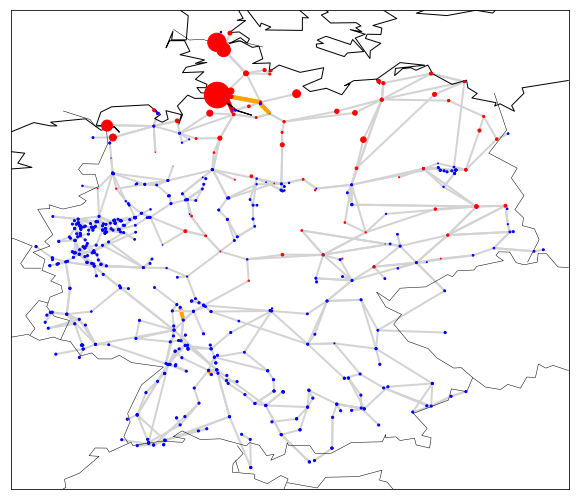}
\caption{\footnotesize
The most likely configuration $p^{(27)}$ that leads to the failure of line $\ell= 27$ (in red), at $4$ pm. The sizes of the buses reflect how much $p^{(27)}$ deviates from $\bmu$ (red for positive deviations, blue for negative ones). The failure of such a line causes, after power redistribution, also the six lines (in orange) to fail.}\label{fig:non_local_4pm}
\end{figure}

In view of Prop.~\ref{prop:second_failure}, the subsequent six failures have been determined by looking at the vector $\bfmostlikely=\widetilde{\bV}^{(\ell)}\bpell$, and checking whether $|\widetilde{f}_k^{(\ell)}|\ge 1$ for each line $k\neq \ell$.  According to Prop.~\ref{prop:second_failure}, the pre-limit conditional probabilities
\[\mathbb{P}(|(\frv_{\eps})_{k_j}|\ge 1\,\rvert\,|(\bfl_{\eps})_\ell|\ge 1)\]
converge exponentially fast to $1$ as $\eps \to 0$, and in particular the cumulative distribution functions
\[\mathbb{P}((\frv_{\eps})_{k_j}\le x\,\rvert\,|(\bfl_{\eps})_\ell|\ge 1)\]
converge to the deterministic distribution $\widetilde{f}_{k_j}^{(\ell)}$.
In order to validate our methodology, we numerically evaluate
\[\mathbb{P}(|(\frv_{\eps})_{k_j}|< 1\, \rvert \, |(\bfl_{\eps})_\ell|\ge 1),\]
for $j=1,\ldots,6$, and found that the probability that all the six lines identified by the large deviations approach actually fail in the pre-limit is equal to
\begin{align*}
&\,\mathbb{P}(|(\frv_{\eps})_{k_j}|\ge 1\,\forall j=1,\ldots,6\, \rvert\, |(\bfl_{\eps})_\ell|\ge 1)\ \notag \\
&\ge \,1-\sum_{j=1}^6\mathbb{P}(|(\frv_{\eps})_{k_j}|< 1\, \rvert\,  |(\bfl_{\eps})_\ell|\ge 1)=0.9987.\label{eq:val_cascade}
\end{align*}
%

\subsection{German network: System security vs System cost}\label{German grid: System security vs System cost}
In order to model a heavily-loaded but not overloaded system, in the OPF we scale the true line limits $C_{\ell}$ by a contingency factor of $\lambda\in (0,1)$. This is a common practice in power engineering that allows room for reactive power flows and stability reserve.

We explore the trade-off between system security and system cost, by varying the contingency factor $\lambda$ in the range $\lambda\in [0.7,1]$. We evaluate system security by means of the large deviations approximation for the failure probability of a given line $\ell$,
\begin{equation}\label{eq:pr}
\Pl(\ell) =\exp (-  I_\ell(\lambda)),
\end{equation}
where we emphasize the dependency on $\lambda$, and we use the average Locational Marginal Price (LMP,~\cite{Schweppe1988}) and the maximum LMP at the grid nodes as metrics of system costs.

Fig.~\ref{fig:LMPvsProb} reports the results corresponding to the same setting as in Fig.~\ref{fig:non_local_4pm}.
From this graph one can, for instance, immediately infer that making line $27$ (the red line in Fig.~\ref{fig:mostlikely_11am}) ten times as safe will roughly cost 1 €/MWh on average, while the increase in cost in terms of maximum price can be much more significant.
This example shows how our large deviations theoretical framework can be a valuable tool to help designing a safe and reliable network at minimal cost. However, as Eq.~\eqref{eq:pr} may not be accurate, more research in this direction is necessary.

%
%
%



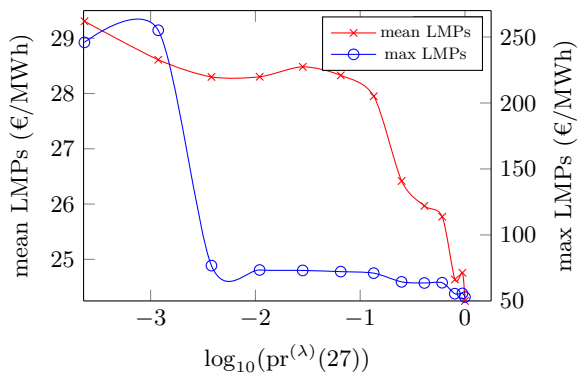
\begin{figure}[!htb]
\centering
\begin{tikzpicture}
\centering
\def\axisdefaultwidth{9cm}
\def\axisdefaultheight{7cm}
\pgfplotsset{
    xmin=-3.64, xmax=0.25
}
\begin{axis}[
  legend style={nodes={scale=0.05, transform shape}},
  axis y line*=left,
  ymin=24.247, ymax=29.5,
  ytick distance=1,
  xlabel=$\log_{10}(\text{pr}^{(\lambda)}(27))$,
  ylabel=mean LMPs (€/MWh),
]
\addplot[smooth,mark=x,red]
  coordinates{
	(-3.631,29.307)
	(-2.927,28.611)
	(-2.419,28.300)
	(-1.959,28.303)
	(-1.548,28.483)
	(-1.185,28.328)
	(-0.870,27.950)
	(-0.604,26.417)
	(-0.387,25.966)
	(-0.217,25.771)
	(-0.096,24.635)
	(-0.024,24.756)
	(0.,24.2472)
}; \label{plot_one}
\addlegendentry{mean LMPs}
\end{axis}
\begin{axis}[
 legend style={nodes={scale=0.7, transform shape}},
  axis y line*=right,
  axis x line=none,
  ymin=50, ymax=270,
  ytick distance=50,
  ylabel=max LMPs (€/MWh)
]
\addlegendimage{/pgfplots/refstyle=plot_one}\addlegendentry{mean LMPs}
\addplot[smooth,mark=o,blue]
  coordinates{
	(-3.63193923,245.91932364)
	(-2.92750962,255.12788312)
	(-2.41942944,76.77073475)
	(-1.95973785,73.43800017)
	(-1.54843484,73.10351827)
	(-1.18552043,72.19827009)
	(-0.8709946,71.05671876)
	(-0.60485736,64.45094029)
	(-0.38710871,63.59735028)
	(-0.21774865,63.78439619)
	(-0.09677718,55.50065369)
	(-0.02419429,55.70097569)
	(0.,52.93556707)
}; \label{plot_two}
\addlegendentry{max LMPs}
\end{axis}
\end{tikzpicture}
\caption{\footnotesize Average LMP (scale on left) and Maximum LMP (right) vs.
$\log_{10} (\Pl(27)) =\log_{10} (\exp (- I_{27}))$, for the German network at $4$pm. }
\label{fig:LMPvsProb}
\end{figure}

Reducing the security margin does not only influence the average LMPs and system costs, but also their geographical distribution. Fig.~\ref{fig:LMPgeo} shows geographically accurate LMPs for two values of $\lambda$, one corresponding to a low effective limit/large security margin system $(\lambda=0.7)$ and the other to a large effective limit/low security margin system $(\lambda=0.95)$. We can see how to a more conservative system corresponds LMPs which are larger especially in the south and south-west part of Germany, while in northern Germany the difference is less pronounced. Quoting \cite{PyPSA2017}, this phenomenon can be explained by the fact that ``transmission bottlenecks in the middle of Germany prevent the transportation of this cheap electricity to the South, where more expensive conventional generators set the price".

\begin{figure}[!h]
\centering
\includegraphics[width=\columnwidth]{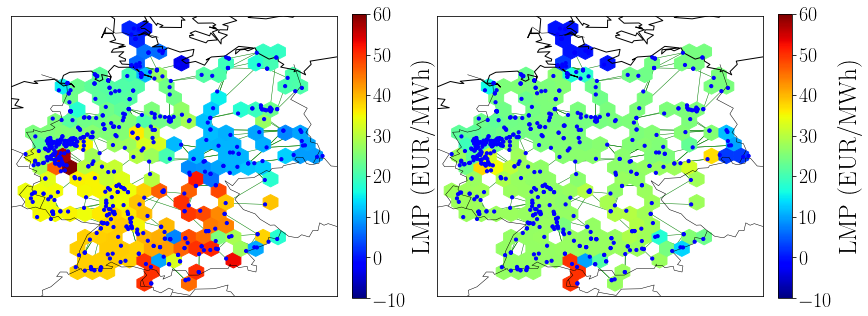}
\caption{\footnotesize Geographical distribution of LMPs for $\lambda=0.7$ (left) and $\lambda=0.95$ (right), at $4$pm.}
\label{fig:LMPgeo}
\end{figure}

Furthermore, Fig.~\ref{fig:number_lines} shows that reducing the system security margin does not only increases the likelihood of an overload, but it also increases the number of lines with a large enough overload probability.

\begin{figure}[!h]
\centering
\includegraphics[width=0.95\columnwidth]{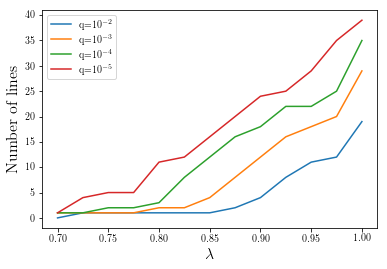}
\caption{\footnotesize     Number of lines $\ell$ with overload probability $\text{pr}^{(\lambda)}(\ell)\ge q$.}
\label{fig:number_lines}
\end{figure}

%

\FloatBarrier

\section{Cascading analysis: classical versus emergent failures}
As illustrated earlier, the most likely power injections configuration leading to the emergent failure of a given line can be used in combination with the power flow redistribution rules to generate the failures triggered by that initial scenario. By repeating this procedure for all lines, one can obtain insightful statistics of the first two stages of emergent cascading failures (ec) and compare them with those of classical cascading failures (cc), obtained using nominal power injection values rather than the most likely ones and deterministic removal of the initial failing line. We perform numerical experiments using IEEE test grids. Since several IEEE test-cases do not report realistic transmission limits, line thresholds are taken to be proportional to the average absolute power flow on the corresponding lines, i.e., $C_\ell = (1+\alpha) |\nu_\ell|$, where $\nu_{\ell}$ is a nominal value provided in the dataset, $\alpha=0.25$ and $\Sp$ is the identity matrix.

\begin{table}[!h]
\begin{tabular}{c|c|c|c|c}
\hline
\hline
Graph & \% joint failures & $\, \, \E(F_1^{\textrm{ec}}) \, \,$ & $\, \, \E(F_2^{\textrm{ec}}) \, \,$ & $\, \,\E(F_2^{\textrm{cc}}) \, \,$\\
\hline
IEEE14 & 65.0\% & 4.40 & 8.40 & 4.95 \\
IEEE30 & 97.6\% & 3.73 & 9.88 & 4.95\\
IEEE39 & 80.4\% & 4.78 & 11.39 & 4.85\\
IEEE57 & 88.5\% & 8.00 & 19.00 & 10.44\\
IEEE96 & 72.2\% & 6.70 & 21.47 & 7.31\\
IEEE118 & 91.6\% & 10.40 & 24.53 & 7.56\\
IEEE300 & 87.0\% & 18.13 & 39.19 & 7.42\\
\hline
\hline
\end{tabular}
\caption{\footnotesize     Percentage of joint failures in emergent cascades and average number of failed lines $F_1$ up to stage 1 and $F_2$ up to stage 2 for emergent cascades (ec) and classical cascades (cc) for some IEEE test systems.}
\label{tab:one}
\end{table}

As shown in Table~\ref{tab:one}, emergent cascades have a very high percentage of joint failures and an average number of failures in the first cascade stage much larger than one (in classical cascades only one line is removed in the first cascade stage). Furthermore, the expected total number of failed lines up to the second cascade stage is significantly larger for emergent cascades than for classical cascades. Lastly, failures propagate in emergent cascades on average a bit less far than in classical cascades, as illustrated by the statistics of the failure jumping distance in Table~\ref{tab:two}.

\begin{table}[!h]
\begin{tabular}{c|c|c|c|c}
\hline
\hline
Graph & $\, \,\E(D^{\textrm{ec}}) \,\,$ & $\,\, \E(D^{\textrm{cc}}) \,\,$ & $\, \,c_v(D^{\textrm{ec}}) \,\,$ & $\, \,c_v(D^{\textrm{cc}}) \,\,$\\
\hline
IEEE14 & 0.388 & 0.987 & 0.600 & 1.050\\
IEEE30 & 0.754 & 1.198 & 0.879 & 1.115\\
IEEE39 & 0.898 & 1.633 & 0.891 & 1.149\\
IEEE57 & 1.210 & 2.507 & 0.863 & 1.415\\
IEEE96 & 1.450 & 1.781 & 0.879 & 0.946\\
IEEE118 & 0.679 & 1.638 & 0.745 & 1.169\\
IEEE300 & 1.408 & 2.580 & 0.806 & 1.081\\
\hline
\hline
\end{tabular}
\caption{\footnotesize     Average and coefficient of variation of the failure jumping distance $D$ in stage 2 both for emergent cascades (ec) and classical cascades (cc). The distance between two lines is measured as the shortest path between any of their endpoints.}
\label{tab:two}
\end{table}

Our approach also gives a constructive way to build the so-called ``influence graph'' \cite{Hines2013,Hines2015,Qi2015}, in which a directed edge connects lines $\ell$ and $\ell'$ if the failure of the line $\ell$ triggers (simultaneously or after redistribution) that of line $\ell'$. Fig.~\ref{fig:influencegraph118} shows an example of influence graph built using our large deviations approach. The \textit{cliques} of the influence graph (i.e., its maximal fully connected subgraphs) can then be used to identify clusters of cosusceptable lines~\cite{YangNishikawaMotter2017}, which are the lines that statistically fail often in the same cascade event.

\begin{figure}[!h]
	\includegraphics[scale=0.35]{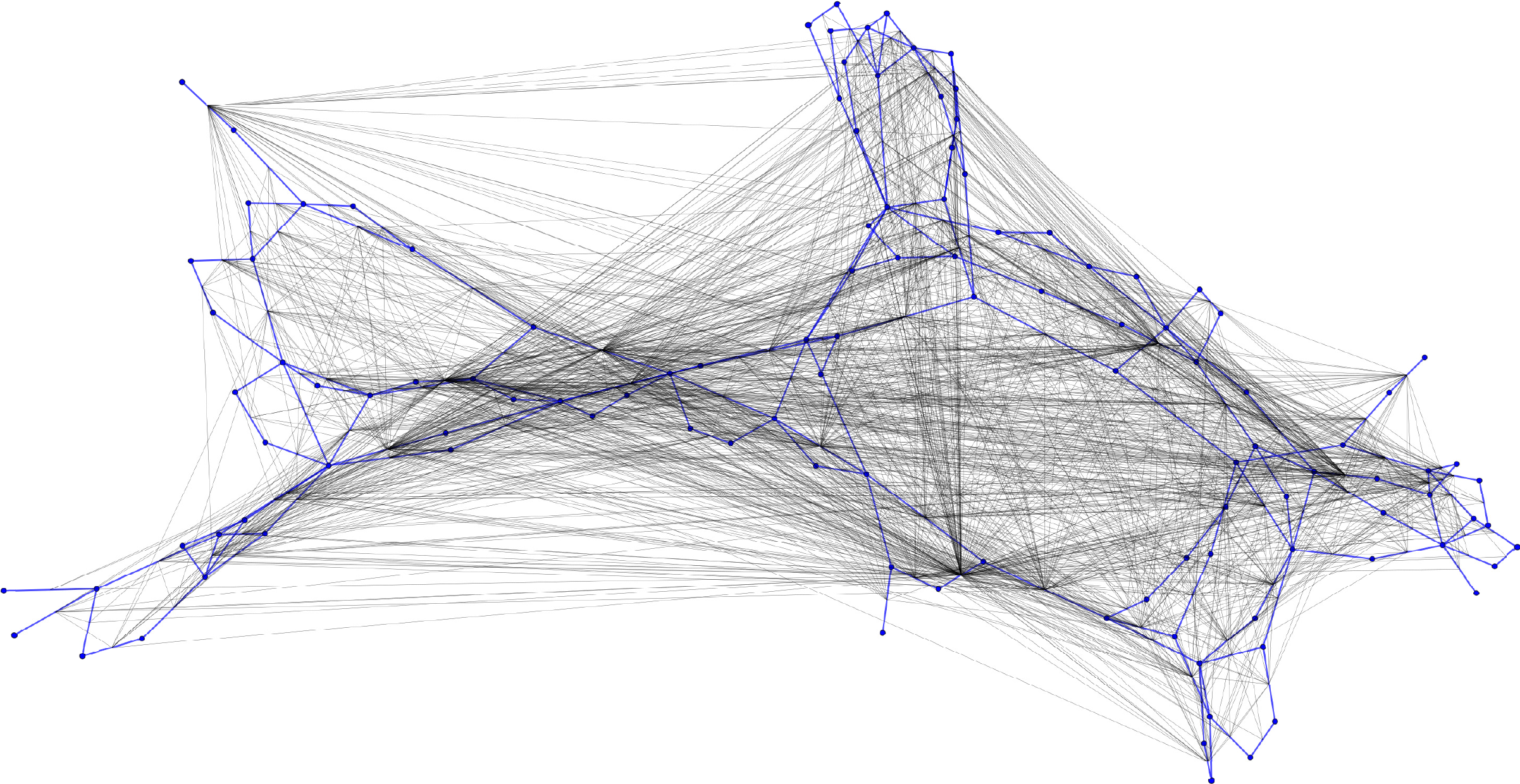}
	\caption{\footnotesize     The influence graph of the IEEE 118-bus test system (in black) built using the first two stages of all cascade realizations has a deeply different structure than the original network (in blue).}
	\label{fig:influencegraph118}
\end{figure}
\end{document}